\documentclass[11pt,letterpaper]{article}

\usepackage{amsthm,amsmath,amssymb,color,fullpage,cleveref,cite}
\usepackage{mathrsfs}
\usepackage[ruled]{algorithm2e}
\usepackage[margin=1in]{geometry}
\usepackage[singlespacing]{setspace}

\newtheorem{theorem}{Theorem}[section]
\newtheorem{lemma}[theorem]{Lemma}

\newtheorem{corollary}[theorem]{Corollary}
\newtheorem{claim}[theorem]{Claim}
\newtheorem{fact}[theorem]{Fact}
\newtheorem{conjecture}[theorem]{Conjecture}
\newtheorem{definition}[theorem]{Definition}

\newtheorem{example}[theorem]{Example}

\newcommand{\eps}{\varepsilon}
\newcommand{\abs}[1]{\left| #1 \right|}
\newcommand{\pbra}[1]{\left( #1 \right)}
\newcommand{\sbra}[1]{\left[ #1 \right]}
\newcommand{\cbra}[1]{\left\{ #1 \right\}}
\newcommand{\bin}{\{0,1\}}
\newcommand{\poly}{\text{poly}}
\newcommand{\Uscr}{\mathscr{U}}

\newcommand{\Rscr}{\mathscr{R}}
\newcommand{\Ecal}{\mathcal{E}}
\newcommand{\Ncal}{\mathcal{N}}
\newcommand{\Ucal}{\mathcal{U}}
\newcommand{\Vcal}{\mathcal{V}}
\newcommand{\usenum}[1]{\#\text{useful}\pbra{#1}}

\DeclareMathOperator*{\E}{\mathbb E}
\newcommand{\Ind}{\text{Ind}}
\newcommand{\Stab}{\text{Stab}}
\newcommand{\Enc}{\text{Enc}}
\newcommand{\Dec}{\text{Dec}}
\newcommand{\True}{\texttt{True}}

\title{Decision list compression by mild random restrictions}
\author{
Shachar Lovett\thanks{Research supported by NSF award 1614023.}\\
Computer Science Department\\
University of California, San Diego\\
\texttt{shachar.lovett@gmail.com}
\and
Kewen Wu\\
School of EECS\\
Peking University, Beijing\\
\texttt{shlw\_kevin@pku.edu.cn}
\and
Jiapeng Zhang\thanks{Research Supported by NSF grant CCF-1763299 and Salil Vadhan's Simons Investigator Award.}\\
School of Engineering and Applied Science\\
Harvard University\\
\texttt{jpeng.zhang@gmail.com}
}

\begin{document}

\maketitle

\begin{abstract}
A decision list is an ordered list of rules. Each rule is specified by a term, which is a conjunction of literals, and a value. Given an input, the output of a decision list is the value corresponding to the first rule whose term is satisfied by the input. Decision lists generalize both CNFs and DNFs, and have been studied both in complexity theory and in learning theory.

The size of a decision list is the number of rules, and its width is the maximal number of variables in a term. We prove that decision lists of small width can always be approximated by decision lists of small size, where we obtain sharp bounds. This in particular resolves a conjecture of Gopalan, Meka and Reingold (Computational Complexity, 2013) on DNF sparsification.

An ingredient in our proof is a new random restriction lemma, which allows to analyze how DNFs (and more generally, decision lists) simplify if a small fraction of the variables are fixed. This is in contrast to the more commonly used switching lemma, which requires most of the variables to be fixed.
\end{abstract}

\section{Introduction}
\label{sec:intro}

Decision lists are a model to represent boolean functions, first introduced by Rivest \cite{rivest1987learning}. A decision list is given by a list of rules $(C_1,v_1),\ldots,(C_m,v_m)$. A rule is composed of a condition, given by a term $C_i$, which is a conjunction of literals (variables or their negations); and an output value $v_i$ in some set $V$. A decision list computes a function $f:\{0,1\}^n \to V$ as follows:
\begin{center}
\begin{minipage}{0.5\textwidth}
\textbf{If $C_1(x)=\True$ then} output $v_1$,\\
\textbf{else if $C_2(x)=\True$ then} output $v_2$,\\
\ldots, \\
\textbf{else if $C_{m}(x)=\True$ then} output $v_{m}$.
\end{minipage}
\end{center}
The last rule is the \emph{default value}, where we assume that $C_m \equiv \True$.

Decision lists generalize both CNFs and DNFs. For example, a DNF is a decision list with $v_1=\cdots=v_{m-1}=1$ and $v_m=0$, and a CNF is a decision list with $v_1=\cdots=v_{m-1}=0$ and $v_m=1$. It can be shown that decision lists are a strict generalization of both DNFs and CNFs \cite{rivest1987learning,kohavi1993research}. Following Rivest's original work, decision lists have been studied both in complexity theory \cite{blum1992rank,turan1997linear,guijarro2001monotone,eiter2002decision,krause2006computational,arvind2015isomorphism,abs-1901-05911} and in learning theory \cite{kearns1987learnability,bagallo1990boolean,hancock1996lower,ehrenfeucht1989general,nevo2002online,wang2015falling,witten2016data}.

\paragraph{Complexity measures of decision lists.}
There are two natural complexity measures of decision lists: \emph{size} and \emph{width}. Let $L=((C_i,v_i))_{i\in[m]}$ be a decision list. Its \emph{size} is the number of rules in it (namely $m$), and its \emph{width} is the maximal number of variables in a term $C_i$.

\paragraph{Decision list approximation.}
A decision list $L$ $\eps$-approximates another decision list $L'$ if the two agree on a $(1-\eps)$ fraction of the inputs.
It is straightforward to see that small-size decision lists can be approximated by small-width decision lists, by removing rules of large width. Concretely, a decision list of size $m$ can be $\eps$-approximated by a decision list of width $w=\log(m/\eps)$, simply by removing all rules with terms of width more than $w$.
The reverse direction is the main focus of this work. We prove the following result, which provides sharp bounds on approximating small-width decision lists by small-size decision lists.

\begin{theorem}[Main result]\label{thm:main}
Let $w \ge 1, \eps>0$.
Any width-$w$ decision list $L$ can be $\eps$-approximated by a decision list $L'$ of width $w$ and size $s=\pbra{2+\frac1w\log\frac1\eps}^{O(w)}$. Moreover, $L'$ is a sub-decision list of $L$, obtained by keeping $s$ rules in $L$ and removing the rest. The bound on $s$ is optimal, up to the unspecified constant in the $O(w)$ term.
\end{theorem}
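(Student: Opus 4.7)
The plan is to prove the bound by induction on the width $w$, using a mild random restriction to reduce the width by one per step. Writing $L=((C_1,v_1),\ldots,(C_m,v_m))$ and $p_i=\Pr_x[\text{rule }i\text{ fires on uniform }x]$, one has $\sum_i p_i=1$, and removing any set $T\subseteq[m]$ of rules changes the output on at most $\sum_{i\in T}p_i$ fraction of inputs. The problem is therefore to find a subset $S\subseteq[m]$ with $|S|\le s$ and tail mass $\sum_{i\notin S}p_i\le\eps$; keeping these rules in their original order automatically yields a sub-decision list.

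The engine of the induction is a mild random restriction lemma. Let $\rho$ be a $p$-restriction with $p=\Theta(1/w)$: each variable is left alive with probability $1-p$ and otherwise set to a uniform bit. The lemma I aim to prove says that, with probability $\ge 1-\eps/2$ over $\rho$, the restricted list $L|_\rho$ admits an $\eps/2$-approximating sub-list of at most $K=\pbra{2+\tfrac{1}{w}\log\tfrac{1}{\eps}}^{O(1)}$ useful rules. Because $\rho$ only kills or shortens terms, those $K$ rules in $L|_\rho$ correspond to $K$ specific positions of $L$ and immediately form a candidate sub-list of $L$. Moreover, since on average each surviving term has at least one of its literals touched by $\rho$ (either removed from the term or forced to its satisfying value, so that the term's effective width drops), the inner list behaves essentially as a width-$(w-1)$ decision list. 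Applying the inductive hypothesis to it yields the recursion $s_w(\eps)\le K\cdot s_{w-1}(\eps)$, which unrolls over $w$ levels to the target $s=\pbra{2+\tfrac{1}{w}\log\tfrac{1}{\eps}}^{O(w)}$.

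The main obstacle is proving the random restriction lemma with the sharp per-step constant $K$. Since $p$ is mild, each width-$w$ term dies only with constant probability, so the savings cannot come from terms vanishing in isolation: a naive per-term analysis loses $2^{O(w)}$ per level, which is fatal to the final bound. The proof must instead exploit the ordered structure of the list---once some $C_j|_\rho$ becomes constantly true, every later rule is shadowed and can be dropped for free. My approach is to find, within the first several rules of $L$, a variable $x_\ell$ appearing in a large fraction of their terms and condition on $\rho(x_\ell)$: one polarity kills a constant fraction of those rules outright, while the other effectively shrinks their widths by one. Averaging over $\rho$ and iterating this ``pick-and-branch'' step through the list should yield the single-step bound $K$. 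Splitting the $\eps$-budget geometrically across the $w$ recursion levels so that the product of the per-level factors exactly matches $\pbra{2+\tfrac{1}{w}\log\tfrac{1}{\eps}}^{O(w)}$ is the last piece of bookkeeping, and aligning the construction so that the output remains a sub-list of the original $L$ (rather than of some intermediate restricted object) is the subtle consistency check I would carry through the induction.
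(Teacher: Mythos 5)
Your plan takes a genuinely different route from the paper's, and several of its central steps are either unjustified or false. The paper does not induct on width. It fixes a single restriction parameter $\beta$ (chosen in $[1/\ell,1/2]$ where $\eps=2^{-\ell w}$, so \emph{not} $\Theta(1/w)$), proves only an \emph{expectation} bound $\E_\rho[\usenum{L\restriction_\rho}]\le(4/\beta)^w$ via an encoding/decoding argument, and then uses noise stability --- hypercontractivity in one direction and a delicate Cauchy--Schwarz in the other --- to relate the per-index useful probability $q_L(\alpha,i)$ under random restrictions to the per-index hit probability $p_L(i)$ under random inputs, obtaining $p_L(i)\le q_L(1-\beta,i)^{(1+\beta)/(2\beta)}$. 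Sorting by $p_L$ and bounding the tail then finishes. Your sketch has no analogue of that bridging step, and it is precisely the step that turns a statement about the random object $L\restriction_\rho$ into a statement about a single fixed sub-list of $L$.

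Three concrete gaps. First, you need a high-probability restriction lemma, but the paper establishes only the expectation bound and explicitly flags control of higher moments of $\usenum{L\restriction_\rho}$ as beyond its technique; your stronger statement would need its own proof. Second, the width-reduction step is false: with $p=\Theta(1/w)$, the number of literals of a fixed width-$w$ term that $\rho$ touches is approximately Poisson with mean $\Theta(1)$, so a constant fraction of surviving terms are untouched and still have width exactly $w$. The restricted list is therefore still a width-$w$ list, the inductive hypothesis does not apply to it, and the recursion $s_w\le K\cdot s_{w-1}$ never closes. Third, even granting both of those, the set of positions your restriction lemma highlights depends on the realized $\rho$; a small $J_\rho$ that works on inputs consistent with a good $\rho$ does not by itself yield a single $J$ that works for uniform $x$. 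That $\rho$-to-$x$ transfer is exactly what the noise-stability inequality accomplishes in the paper, and it has no substitute in your outline; the greedy pick-and-branch idea may be worth developing, but as stated it does not address this $\rho$-dependence.
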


The proof of \Cref{thm:main} appears in \Cref{sec:upper}.
We note that the size bound can be simplified, depending on whether
the required error $\eps$ is below or above $2^{-w}$:
$$
\pbra{2+\frac1w\log\frac1\eps}^{O(w)}=\begin{cases}
2^{O(w)}&\eps \ge 2^{-w}\\
\pbra{\frac2w\log\frac1\eps}^{O(w)}&\eps \le 2^{-w}.
\end{cases}
$$
In both cases, the bound we obtain is sharp, up to the unspecified constant in the $O(w)$ term. We give examples demonstrating this in \Cref{sec:lower}.

\subsection{Random restrictions}

Random restrictions are an essential ingredient of the proof of \Cref{thm:main}.
H{\aa}stad's switching lemma \cite{Hastad87,razborov1995bounded,beame1994switching} is based on the fact that small-width DNFs simplify under random restrictions. More concretely, a random restriction that fixes a $1-O(1/w)$ fraction of the inputs simplifies a width-$w$ DNF to a small-depth decision tree. In this work, we study random restrictions where a small constant fraction of the variables is fixed.

A good example to keep in mind is the TRIBES function: a read-once DNF with $2^w$ terms of width $w$ on disjoint variables. The TRIBES function does not simplify significantly under a random restriction, unless one really fixes a  $1-O(1/w)$ fraction of the inputs. For example, if we randomly fix $50\%$ of the inputs, say, then the TRIBES function  simplifies to what is essentially a smaller TRIBES function (more formally, it simplifies with high probability to a read-once DNF of width $\Omega(w)$). However, we show that this is in essence the worst possible example.

The following lemma is a special case of \Cref{lemma:expectedusenum} applied to DNFs (the full lemma deals with decision lists). Given a DNF $f:\bin^n \to \bin$, let $\rho \in \{0,1,*\}^n$ be a restriction, and let $f\restriction_{\rho}$ be the restricted DNF. Clearly, some terms in $f$ might become redundant in $f\restriction_{\rho}$. For example, they could be false, or they could be implied by other terms. A term that is not redundant is called \emph{useful}. We show that after fixing even a small fraction of the variables (say, $1\%$), a width-$w$ DNF simplifies to have at most $2^{O(w)}$ useful terms, and hence cannot be ``too complicated''.

\begin{lemma}[DNFs simplify after mild random restrictions]
\label{lemma:dnf_useful}
Let $f$ be a width-$w$ DNF, and let $f\restriction_{\rho}$ be a restriction of $f$ obtained by restricting each variable with probability $\beta$, where the restricted variables take values 0 and 1 with equal probability. Then the expected number of useful terms in $f\restriction_{\rho}$ is at most $(4/\beta)^w$.
\end{lemma}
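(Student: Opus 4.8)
The plan is an encoding (injection) argument. Write $f=T_1\vee\dots\vee T_m$ with each $T_i$ of width at most $w$, and call $T_i$ \emph{reachable} under a restriction $\rho$ if there is an input $x$ with $T_i\restriction_\rho(x)=\True$ but $T_j\restriction_\rho(x)\neq\True$ for all $j<i$. Every useful term is reachable in this sense, so $\usenum{f\restriction_\rho}\le R(\rho)$ where $R(\rho)$ is the number of reachable terms, and it suffices to bound $\E_\rho[R(\rho)]=\sum_{(\rho,i):\,T_i\text{ reachable under }\rho}\Pr[\rho]$. To each such pair $(\rho,i)$ I would attach a restriction $\pi=\pi(\rho,i)$ refining $\rho$, engineered so that (i) $\Pr[\rho]\le(2/\beta)^w\,\Pr[\pi]$ and (ii) every $\pi$ is hit by at most $2^w$ pairs. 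These two facts close the argument, since then $\sum_{(\rho,i)}\Pr[\rho]\le(2/\beta)^w\sum_{(\rho,i)}\Pr[\pi(\rho,i)]\le(2/\beta)^w\cdot2^w\sum_\pi\Pr[\pi]\le(4/\beta)^w$, using that distinct restrictions are disjoint events.

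The map I would use is the obvious one: $\pi$ agrees with $\rho$ everywhere, except that every variable of $T_i$ that $\rho$ leaves free is now fixed to the unique value that satisfies $T_i$. At most $w$ coordinates then pass from $*$ to a fixed value, and each such coordinate multiplies $\Pr[\rho]/\Pr[\pi]$ by $\frac{1-\beta}{\beta/2}\le\frac2\beta$, which gives (i). (The degenerate case, where $\rho$ already fixes all variables of $T_i$, forces $T_i\restriction_\rho\equiv\True$ by reachability, and there $\pi=\rho$.)

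The step I expect to be the crux is (ii), and in particular the claim that $i$ is recoverable from $\pi$ alone: $i$ is the \emph{smallest} index with $T_i\restriction_\pi\equiv\True$. That $T_i\restriction_\pi\equiv\True$ holds is immediate. For minimality I would argue by contradiction: suppose $k<i$ satisfies $T_k\restriction_\pi\equiv\True$, and let $x$ witness the reachability of $T_i$ under $\rho$. Because $T_i(x)=\True$, the input $x$ agrees with $\pi$ on every variable of $T_i$ that $\rho$ left free; and every variable of $T_k$ is either already fixed by $\rho$ --- hence, since $\pi$ extends $\rho$ and $T_k\restriction_\pi\equiv\True$, fixed consistently with $T_k$ --- or free under $\rho$ and contained in $T_i$, hence set by $x$ in agreement with $\pi$ and so again consistently with $T_k$. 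Thus $T_k(x)=\True$, contradicting $k<i$ together with the choice of $x$. Granting the claim, $i$ is determined by $\pi$, and $\rho$ is recovered from $\pi$ by reverting some subset of the at most $w$ variables of $T_i$ back to $*$ --- at most $2^w$ possibilities --- which is (ii).

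The conceptual point worth pinning down first is the choice of $\pi$: fixing \emph{just enough} variables is the naive instinct, but it is exactly the decision to fix \emph{all} of $T_i$ that makes the index $i$ legible inside $\pi$ as ``the first term that has turned constantly true''. After that the work is routine bookkeeping --- the $\le w$ flipped coordinates for (i), the $\le2^w$ subsets for (ii) --- and multiplying $(2/\beta)^w$ by $2^w$ yields the stated $(4/\beta)^w$.
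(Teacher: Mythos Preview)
Your proposal is correct and follows essentially the same encoding argument as the paper: map a pair $(\rho,i)$ to the refinement $\pi$ that fixes all variables of $T_i$ to satisfy it, recover $i$ from $\pi$ as the first term that becomes identically true, and pay a factor of $2^w$ for the ambiguity in reconstructing $\rho$. The only packaging differences are that the paper works over the fixed-star model $\Rscr(n,\alpha n)$ and passes to the limit (whereas you compute the probability ratio directly under the product measure), and the paper makes the map literally injective by appending an explicit $\{\textsc{Old},\textsc{New}\}^w$ string rather than bounding preimage multiplicity by $2^w$; your direct version is arguably cleaner.
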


\subsection{Applications}

We discuss some applications of \Cref{thm:main} below.

\subsubsection{DNF sparsification}

This decision list compression problem is a natural generalization of the \emph{DNF sparsification} problem, introduced by Gopalan, Meka and Reingold \cite{GopalanMR13} as a means to obtain pseudorandom generators fooling small-width DNFs. Their main structural result can be summarized as follows.

\begin{theorem}[\cite{GopalanMR13}]
\label{thm:GMR}
Any width-$w$ DNF can be $\eps$-approximated by a DNF of width $w$ and size $(w\log(1/\eps))^{O(w)}$.
\end{theorem}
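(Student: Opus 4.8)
Given \Cref{thm:main}, I would just observe that the statement is an immediate corollary. A width-$w$ DNF $C_1\vee\cdots\vee C_{m-1}$ is the width-$w$ decision list $\pbra{(C_1,1),\ldots,(C_{m-1},1),(\True,0)}$, so \Cref{thm:main} produces, for any $\eps$, a width-$w$ sub-decision list $L'$ of size $s=\pbra{2+\frac1w\log\frac1\eps}^{O(w)}$, obtained by deleting rules, that $\eps$-approximates it. Since a decision list computes a total function, the last rule of $L'$ has term $\equiv\True$, and the only such rule available is the default $(\True,0)$; thus $L'$ keeps that rule together with some subset $S\subseteq[m-1]$ of the value-$1$ rules, i.e.\ $L'=\bigvee_{i\in S}C_i$ is itself a width-$w$ DNF of size $|S|+1\le s$. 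It then remains only to check $s\le\pbra{w\log(1/\eps)}^{O(w)}$, which is the routine simplification recorded in the remark after \Cref{thm:main} (separating the cases $\eps\ge2^{-w}$ and $\eps\le2^{-w}$; the claim is anyway trivial for $\eps$ bounded away from $0$). That is the short proof I would give.

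For a self-contained argument --- the route taken in \cite{GopalanMR13}, and essentially what the random-restriction toolkit behind \Cref{thm:main} delivers en route --- the plan is a greedy covering argument. Write $f=\bigvee_iC_i$ and grow a sub-DNF $g_S=\bigvee_{i\in S}C_i$; since $g_S\le f$ pointwise, every error is an input with $f=1$ and $g_S=0$. As long as $\Pr_x[f(x)=1,\,g_S(x)=0]>\eps$, some term of $f$ outside $S$ is satisfied somewhere on the uncovered region, so I would add to $S$ such a term, say one capturing the largest uncovered mass. The number of iterations is exactly the size bound sought, so the whole difficulty is to prove it is $\pbra{w\log(1/\eps)}^{O(w)}$, and in particular \emph{independent of $n$}: the trivial bound from the $n^{\Theta(w)}$ distinct width-$w$ terms is worthless.

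The $n$-independence is what \Cref{lemma:dnf_useful} supplies, through a coupling. I would sample $x\in\bin^n$ uniformly, and coupled to it a restriction $\rho$ that independently fixes each coordinate to its $x$-value with probability $\beta$ (a small absolute constant) and leaves it free otherwise, so that $x$ is a uniform completion of $\rho$. If $x$ satisfies a term $C_i$ then $\rho$ never falsifies $C_i$, so in $f\restriction_\rho$ the term $C_i$ is either useful or is subsumed by a surviving term. One then argues that the terms which are useful in $f\restriction_\rho$ only with tiny probability over $\rho$ may be deleted from $f$ all at once at total cost $\le\eps$, whereas by \Cref{lemma:dnf_useful} together with Markov the terms useful with probability at least $\gamma$ number at most $(4/\beta)^w/\gamma$; combining this with a suitable choice of $\beta$ and $\gamma$ (and, for very small $\eps$, a recursive application) yields the bound. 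The main obstacle in this route is precisely that last step --- turning ``useful only rarely under $\rho$'' into ``collectively droppable with error $\le\eps$'' --- which forces a careful accounting of the two ways a restricted term can die, by being falsified versus by being subsumed by another restricted term; that accounting is exactly the combinatorial content \Cref{lemma:dnf_useful} is built to carry.
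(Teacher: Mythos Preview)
The paper does not give its own proof of \Cref{thm:GMR}; it is quoted as a prior result of \cite{GopalanMR13}. What the paper does prove is the sharper \Cref{cor:dnfsparsification}, and it does so exactly along the lines of your first paragraph: view the width-$w$ DNF as a width-$w$ decision list, apply \Cref{thm:main}, and observe that since $L'$ is a sub-decision list retaining the default rule $(\True,0)$, it is again a DNF. Your first approach is correct and matches the paper.

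Your second paragraph is less accurate on two counts. First, the sketch you describe is not the argument in \cite{GopalanMR13}: \Cref{lemma:dnf_useful} is new to the present paper, and \cite{GopalanMR13} uses a rather different structural analysis. Second, and more substantively, the step you correctly flag as the ``main obstacle'' --- converting ``index $i$ is rarely useful under $\rho$'' into ``the rarely-useful indices are collectively deletable with error $\le\eps$'' --- is not handled in this paper by a direct combinatorial accounting of falsified-versus-subsumed terms. It is handled by the bridging lemma (\Cref{lem:bridging}), which relates the hit probability $p_L(i)$ to the useful probability $q_L(\alpha,i)$ through noise stability and hypercontractivity, yielding $p_L(i)\le q_L(1-\beta,i)^{(1+\beta)/(2\beta)}$ (\Cref{cor:bridging}). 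Your sketch does not supply that ingredient, and a purely greedy/coupling argument along the lines you indicate would not obviously close the gap. Since your first approach already suffices this does not affect correctness, but the attribution and the mechanism in the second paragraph should be revised.
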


They conjectured that a better bound is possible.

\begin{conjecture}[\cite{GopalanMR13}]
\label{conj:GMR}
Any width-$w$ DNF can be $\eps$-approximated by a DNF of width $w$ and size $s(w,\eps)$, where:
\begin{itemize}
\item \emph{Weak version:} $s(w,\eps)=c(\eps)^w$ for some function $c$.
\item \emph{Strong version:} $s(w,\eps)=(\log(1/\eps))^{O(w)}$.
\end{itemize}
\end{conjecture}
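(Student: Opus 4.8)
The plan is to derive both the weak and the strong versions of \Cref{conj:GMR} directly from \Cref{thm:main}, by specializing it to DNFs. Recall that a width-$w$ DNF $\phi=\bigvee_{i=1}^{m-1}C_i$ is precisely the width-$w$ decision list $L=\pbra{(C_1,1),\dots,(C_{m-1},1),(C_m,0)}$ whose last (default) rule satisfies $C_m\equiv\True$. Applying \Cref{thm:main} to $L$ yields a width-$w$ sub-decision-list $L'$, obtained by keeping $s=\pbra{2+\frac1w\log\frac1\eps}^{O(w)}$ rules of $L$, that $\eps$-approximates $L$.

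The first thing to verify is that $L'$ is again a DNF. Since $L'$ is itself a legal decision list, its final rule must have a term that is identically $\True$; but the only rule of $L$ with this property is the default rule $(C_m,0)$. Hence $L'$ keeps $(C_m,0)$ as its last rule, and every other rule it keeps has value $1$, so $L'=\pbra{(C_{i_1},1),\dots,(C_{i_k},1),(C_m,0)}$ for some $k\le s-1$; equivalently, $L'$ computes the width-$w$ DNF $\bigvee_{j=1}^k C_{i_j}$ of size at most $s$. It then remains only to simplify the bound on $s$ into the forms stated in the conjecture. Using $w\ge1$ we get $2+\frac1w\log\frac1\eps\le 2+\log\frac1\eps$, so $s\le\pbra{2+\log\frac1\eps}^{O(w)}=c(\eps)^w$ with $c(\eps):=\pbra{2+\log\frac1\eps}^{O(1)}$, which is the weak version. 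For the strong version, in the regime $\eps\le\frac14$ where the claimed bound is nontrivial we have $\log\frac1\eps\ge2$, hence $2+\frac1w\log\frac1\eps\le 2\log\frac1\eps$ and $s\le\pbra{2\log\frac1\eps}^{O(w)}=\pbra{\log\frac1\eps}^{O(w)}$.

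I do not expect a genuine obstacle: the entire substance of the conjecture is already contained in \Cref{thm:main}, and the passage to DNFs is essentially forced by the convention that a decision list ends with a $\True$ default rule. The only point deserving a line of care is the observation above that $L'$ is \emph{literally} a DNF, not merely a decision list that happens to compute one, which follows because \Cref{thm:main} only deletes rules (it neither reorders nor rewrites them) and the $\True$-terminated structure pins down which rule must survive as the default. As a bonus, since the size bound in \Cref{thm:main} is optimal up to the constant in the exponent, this settles \Cref{conj:GMR} in the sharpest possible form, and in particular improves on the bound of \Cref{thm:GMR}.
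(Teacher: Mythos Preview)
Your proposal is correct and follows the same route as the paper: the conjecture is settled as an immediate corollary of \Cref{thm:main}, and the paper records this as \Cref{cor:dnfsparsification} without giving a separate argument. Your extra care in checking that the sub-list $L'$ is literally a DNF is warranted and matches the paper's convention that the default rule (index $m$) is always retained in $L|_J$; the one tacit assumption you make---that no earlier term $C_i$ is identically $\True$---is harmless, since such a DNF is constant and the claim is trivial there.
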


The weak version was resolved by Lovett and Zhang \cite{LovettZ19}, where they showed that $c(\eps)=(1/\eps)^{O(1)}$ suffices. Our main result, \Cref{thm:main}, verifies the strong version of their conjecture (and in fact, proves a sharper bound than the one conjectured).

\begin{corollary}[This work]\label{cor:dnfsparsification}
Any width-$w$ DNF can be $\eps$-approximated by a DNF of width $w$ and size $\pbra{2+\frac1w\log\frac1\eps}^{O(w)}$.
\end{corollary}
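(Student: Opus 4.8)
The plan is to derive \Cref{cor:dnfsparsification} directly from \Cref{thm:main}, exploiting the fact that a DNF is a decision list of a very restricted shape together with the ``sub-decision list'' conclusion of the theorem. Concretely, fix a width-$w$ DNF $f = T_1 \vee \cdots \vee T_{m-1}$ and view it as the decision list $L = ((T_1,1),\ldots,(T_{m-1},1),(\True,0))$ of width $w$, i.e.\ the standard encoding described in \Cref{sec:intro} in which every rule except the default outputs $1$ and the default outputs $0$. Apply \Cref{thm:main} to $L$ with the same $\eps$; this produces a decision list $L'$ of width $w$ and size $s = \pbra{2+\frac1w\log\frac1\eps}^{O(w)}$ that $\eps$-approximates $L$ (equivalently, $f$), and which is obtained from $L$ simply by deleting some of its rules.

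The heart of the argument is then the observation that a sub-decision list of $L$ is itself (essentially) a DNF. Since $L'$ is a well-formed decision list, its final rule has a term that is $\equiv \True$; but in $L$ the only such rule is the default $(\True,0)$ --- the remaining possibility, that $f$ already contains a width-$0$ term and is therefore the constant function $1$, is trivial to approximate and can be handled separately. Hence the deletion process retains the default rule $(\True,0)$ together with a subset $S \subseteq [m-1]$ of the term-rules $(T_i,1)$, so that $L'$ computes exactly $\bigvee_{i\in S} T_i$, a width-$w$ DNF with $|S| \le s$ terms. This DNF $\eps$-approximates $f$, which is precisely the claimed statement, after noting that the number of terms ($|S|$) and the number of decision-list rules ($|S|+1$) differ by one, a discrepancy absorbed into the $O(w)$ exponent.

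I do not anticipate a real obstacle: all of the mathematical content lives in \Cref{thm:main}, and the reduction above is essentially syntactic. The only points that warrant a sentence of care are verifying that deleting rules from the DNF encoding preserves the default rule (so that the resulting object is literally a DNF and not a general decision list), dispensing with the degenerate constant-$1$ case, and observing that ``size'' measured as number of terms versus number of rules changes nothing asymptotically.
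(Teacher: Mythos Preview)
Your proposal is correct and matches the paper's approach: the paper states \Cref{cor:dnfsparsification} as an immediate consequence of \Cref{thm:main} without a separate proof, relying implicitly on exactly the observation you spell out, that a sub-decision list of a DNF-encoded decision list is again a DNF. Your care about retaining the default rule is in fact baked into the paper's construction (see \Cref{thm:compress} and \Cref{lem:compress}, which always include $m\in J$), so the degenerate case does not even arise.
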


We remark that \Cref{cor:dnfsparsification} is also tight, up to the unspecified constant in the $O(w)$ term. The proof is very similar to the proof in \Cref{sec:lower} that \Cref{thm:main} is tight. We sketch the proof here:
\begin{itemize}
\item For $2^{-2w}\leq\eps\leq1/3$, \Cref{lem:approxlowerbound} shows the existence of a function $f:\{0,1\}^w \to \{0,1\}$ that cannot be $(1/3)$-approximated by any decision list of width $w$ and size $O(2^w/w)$. In particular, $f$ cannot be approximated by a DNF of width $w$ and size $O(2^w / w)$. Note that $f$ can trivially be computed by a DNF of width $w$ and size $2^w$, and that 
$2^{\Omega(w)}=\pbra{2+\frac1w\log\frac1\eps}^{\Omega(w)}$ in this regime.
\item For $\eps\leq2^{-2w}$, consider exactly computing the Threshold-$w$ function on $\log(1/\eps)$ variables, which amounts to approximation with any error $<\eps$.
This requires a width-$w$ DNF of size $\binom{\log(1/\eps)}w=\pbra{2+\frac1w\log\frac1\eps}^{\Omega(w)}$.
\end{itemize}

\subsubsection{Junta theorem}

A $k$-junta is a function depending on at most $k$ variables. Friedgut's junta theorem \cite{friedgut1998boolean} shows that boolean functions of small influence can be approximated by juntas. For the relevant definitions see for example \cite{o2014analysis}.

\begin{theorem}[Friedgut's junta theorem \cite{friedgut1998boolean}]
\label{thm:friedgut}
Let $f:\bin^n \to \bin$ be a boolean function with total influence $I$. Then for any $\eps>0$, $f$ can be $\eps$-approximated by a $k$-junta for $k=2^{O(I/\eps)}$.
\end{theorem}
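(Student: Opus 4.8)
The plan is to use the standard Fourier-analytic approach. It is convenient to take $f\colon\bin^n\to\{-1,1\}$ (this changes none of the influences), and to recall that in terms of the Fourier coefficients $\hat f(S)$ one has $I=\sum_i\mathrm{Inf}_i(f)=\sum_{S}|S|\,\hat f(S)^2$. The first step is a degree truncation: set $k=\lceil 2I/\eps\rceil$, so that $\sum_{|S|>k}\hat f(S)^2\le\frac1k\sum_S|S|\hat f(S)^2=I/k\le\eps/2$. Hence levels above $k$ contribute at most $\eps/2$ to any $L_2$-error, and it suffices to understand the degree-$\le k$ part of $f$.

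The second step picks out the junta using low-degree influences. Write $\mathrm{Inf}_i^{\le k}(f)=\sum_{S\ni i,\,|S|\le k}\hat f(S)^2$, fix a threshold $\tau>0$ to be chosen below, and set $J=\{\,i:\mathrm{Inf}_i^{\le k}(f)\ge\tau\,\}$. Two estimates are needed. The cheap one, $\sum_i\mathrm{Inf}_i^{\le k}(f)=\sum_{0<|S|\le k}|S|\hat f(S)^2\le k$, gives $|J|\le k/\tau$. The essential one is hypercontractivity: applying $\|T_{1/\sqrt{3}}g\|_2\le\|g\|_{4/3}$ to the discrete derivative $g=D_if$, and using that $D_if$ is $\{0,\pm1\}$-valued (so $\|D_if\|_{4/3}^2=\mathrm{Inf}_i(f)^{3/2}$) while $\|T_{1/\sqrt{3}}D_if\|_2^2\ge 3^{-k}\,\mathrm{Inf}_i^{\le k}(f)$, we get $\mathrm{Inf}_i^{\le k}(f)\le 3^{k}\,\mathrm{Inf}_i(f)^{3/2}$.

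The third step bounds the error of the candidate $|J|$-junta $g:=\mathrm{sgn}\left(\sum_{S\subseteq J}\hat f(S)\chi_S\right)$. Since $f$ is $\{-1,1\}$-valued, rounding to the sign can only help, so
$$\Pr[f\ne g]\le\left\|f-\sum_{S\subseteq J}\hat f(S)\chi_S\right\|_2^2=\sum_{S\not\subseteq J}\hat f(S)^2\le\frac{\eps}{2}+\sum_{i\notin J}\mathrm{Inf}_i^{\le k}(f),$$
where the last inequality splits off levels above $k$ and charges each remaining $S\not\subseteq J$ to a coordinate in $S\setminus J$. For $i\notin J$ we have $\mathrm{Inf}_i^{\le k}(f)<\tau$, and combining with the hypercontractive bound, $\mathrm{Inf}_i^{\le k}(f)=\bigl(\mathrm{Inf}_i^{\le k}(f)\bigr)^{1/3}\bigl(\mathrm{Inf}_i^{\le k}(f)\bigr)^{2/3}\le\tau^{1/3}\,3^{2k/3}\,\mathrm{Inf}_i(f)$, so $\sum_{i\notin J}\mathrm{Inf}_i^{\le k}(f)\le\tau^{1/3}\,3^{2k/3}\,I$. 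Choosing $\tau=(\eps/2I)^3\,3^{-2k}$ makes this $\le\eps/2$, hence $\Pr[f\ne g]\le\eps$, while $|J|\le k/\tau=k\,(2I/\eps)^3\,3^{2k}=2^{O(k)}=2^{O(I/\eps)}$; taking $g$ as the approximator finishes the proof.

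The step I expect to be the real obstacle is the hypercontractive inequality $\mathrm{Inf}_i^{\le k}(f)\le 3^k\,\mathrm{Inf}_i(f)^{3/2}$ and its use in the third step: without it, the only available bound on $\sum_{i\notin J}\mathrm{Inf}_i^{\le k}(f)$ is essentially $(\text{number of relevant variables})\cdot\tau$, which forces $\tau$ to depend on $n$ and collapses the argument. This is also why one cannot simply approximate $f$ by a small-width DNF and invoke \Cref{thm:main}: low-influence functions such as $\mathrm{MAJ}_n$ are far from every small-width DNF yet genuinely need an $n$-sized junta, so any purely compression-based route would first have to detect and set aside such \emph{spread-out} functions — which is precisely what the low-degree-influence bookkeeping above accomplishes.
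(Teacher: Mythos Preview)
The paper does not prove \Cref{thm:friedgut}: it is stated as a cited background result from \cite{friedgut1998boolean}, with no accompanying proof. So there is nothing in the paper to compare your argument against.

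That said, your proof is correct and is essentially Friedgut's original argument (as exposited, for instance, in \cite{o2014analysis}): truncate to degree $k=O(I/\eps)$, define the junta $J$ via a low-degree influence threshold, use the $(4/3,2)$-hypercontractive inequality on $D_if$ to get $\mathrm{Inf}_i^{\le k}(f)\le 3^{k}\,\mathrm{Inf}_i(f)^{3/2}$, and interpolate to bound $\sum_{i\notin J}\mathrm{Inf}_i^{\le k}(f)$. The computations check out, including the rounding step $\Pr[f\ne g]\le\|f-\sum_{S\subseteq J}\hat f(S)\chi_S\|_2^2$ and the final parameter choice $\tau=(\eps/2I)^3\,3^{-2k}$.

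Your closing remark is also on point, though perhaps aimed at a concern the paper never raises: the paper does \emph{not} attempt to recover the general Friedgut theorem from \Cref{thm:main}. It only uses \Cref{thm:main} to give an improved junta bound for the special class of width-$w$ decision lists (where total influence is $O(w)$ automatically), and in that restricted setting the compression route genuinely improves on what Friedgut gives. Your observation that functions like $\mathrm{MAJ}_n$ block any purely DNF-compression derivation of the full Friedgut theorem is correct but orthogonal to what the paper claims.
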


It is well known that width-$w$ DNFs have total influence $I=O(w)$, which implies by \Cref{thm:friedgut} that width-$w$ DNFs can be $\eps$-approximated by $2^{O(w/\eps)}$-juntas. Since a width-$w$ size-$s$ decision list is a $(sw)$-junta, as a corollary of \Cref{thm:main}, we improve the bound, and generalize it to decision lists.

\begin{corollary}[This work]
Any width-$w$ decision list can be $\eps$-approximated by a $k$-junta for $k=\pbra{2+\frac1w\log\frac1\eps}^{O(w)}$.
\end{corollary}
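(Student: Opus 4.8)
The plan is to derive this directly from \Cref{thm:main}. Given a width-$w$ decision list $L$ and an error parameter $\eps>0$, \Cref{thm:main} produces a decision list $L'$ of width $w$ and size $s=\pbra{2+\frac1w\log\frac1\eps}^{O(w)}$ that $\eps$-approximates $L$. The only remaining tasks are to observe that such an $L'$ is a junta on few variables, and to check that the resulting junta arity matches the claimed bound.

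First I would bound the number of variables on which $L'$ depends. Writing $L'=((C_i,v_i))_{i\in[s']}$ with $s'\le s$, each term $C_i$ is a conjunction of at most $w$ literals, so it involves at most $w$ distinct variables. Hence the value of $L'$ on any input is determined by the values of the variables appearing in $C_1,\ldots,C_{s'}$, a set of size at most $s'w\le sw$. Thus $L'$ is a $k$-junta with $k\le sw$.

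Second I would verify that $sw$ is still of the form $\pbra{2+\frac1w\log\frac1\eps}^{O(w)}$. For any meaningful error $\eps<1$ we have $2+\frac1w\log\frac1\eps\ge 2$, and since $w\ge 1$ this gives $w\le 2^w\le\pbra{2+\frac1w\log\frac1\eps}^{w}$. Multiplying $s$ by the extra factor $w$ therefore only changes the constant hidden in the exponent: $k\le sw=\pbra{2+\frac1w\log\frac1\eps}^{O(w)}$. Combining the two steps, $L'$ is a $k$-junta with $k=\pbra{2+\frac1w\log\frac1\eps}^{O(w)}$ that $\eps$-approximates $L$, which is exactly the claim.

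I do not anticipate a genuine obstacle here: all the content is in \Cref{thm:main}, and the remaining steps are the trivial observation that a small-size, small-width decision list depends on few variables, together with the routine check that absorbing a factor of $w$ leaves the bound's shape unchanged. The one point worth stating explicitly, so as not to be glib, is precisely that last arithmetic comparison; everything else is immediate.
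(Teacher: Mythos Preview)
Your proposal is correct and matches the paper's own argument: the paper simply notes that a width-$w$ size-$s$ decision list is an $(sw)$-junta and then invokes \Cref{thm:main}. Your additional verification that the extra factor of $w$ is absorbed into the $O(w)$ exponent is more explicit than the paper, but the approach is identical.
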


This improves previous bounds, even when restricted to DNFs or CNFs. By combining the results in \cite{GopalanMR13,LovettZ19} one gets the bound $k=\min\cbra{w\log(1/\eps),1/\eps}^{O(w)}$ for width-$w$ DNFs or CNFs. It can be verified that our new result is indeed better; for example for $\eps=w^{-w}$ we obtain $(\log w)^{O(w)}$ instead of $w^{O(w)}$. It is also worthwhile noting that the result of \cite{LovettZ19}, which obtained the bound $(1/\eps)^{O(w)}$, can be extended to decision lists with minimal changes.

\subsubsection{Learning small-width DNFs}
A class of boolean functions is said to be $(\eps,\delta)$-PAC learnable using $q$ queries if there exists a learning algorithm that, given query access to an unknown function in the class, returns with probability $(1-\delta)$ a function which $\eps$-approximates the unknown function, while making at most $q$ queries. In our context we consider membership queries, where the learning algorithm can query the value of the unknown function on any chosen input.

A celebrated result of Jacskson \cite{jackson1997efficient} shows that polynomial-size DNFs can be PAC learned under the uniform distribution using membership queries.

\begin{theorem}[Jackson's harmonic sieve \cite{jackson1997efficient}]
The class of $n$-variate DNFs of size $s$ is $(\eps,\delta)$-PAC learnable under the uniform distribution with $q=\poly(s,n,1/\eps,\log(1/\delta))$ membership queries.
\end{theorem}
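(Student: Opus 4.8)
The plan is to follow the classical \emph{weak learner $+$ boosting} architecture that underlies the harmonic sieve; none of the earlier results in this paper are needed here. Throughout, view a boolean function as $\pm1$-valued via $F(x)=(-1)^{f(x)}$, write $\chi_a(x)=(-1)^{\langle a,x\rangle}$ for the parity indexed by $a\in\bin^n$, and call a distribution $D$ on $\bin^n$ \emph{$\lambda$-smooth} if $2^nD(x)\le\lambda$ for every $x$. The two pieces I would build are: (i) a weak learner that, for any size-$s$ DNF target and any smooth $D$, returns a parity with nonnegligible $D$-correlation with the target; and (ii) a boosting procedure that amplifies this to $\eps$-accuracy while only ever querying the weak learner on distributions that stay smooth.

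\emph{Step 1 (a heavy parity always exists).} I would first show that if $f$ is a DNF with at most $s$ terms, then for \emph{every} distribution $D$ there is $a\in\bin^n$ with $\abs{\E_{x\sim D}\sbra{F(x)\chi_a(x)}}\ge\Omega(1/s)$. This is a two-level averaging argument: either a constant function already correlates $\Omega(1/s)$ with $F$ under $D$, or, since $f$ is true iff one of its $s$ terms is, one term $T$ --- an AND of literals --- correlates $\Omega(1/s)$ with $F$ as a predictor; and because a single term equals a $\pm2^{-|T|}$-weighted sum of the parities on its own variables, one such parity inherits correlation $\Omega(1/s)$ with $F$ under $D$. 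Hence the $D$-weighted Fourier spectrum of $F$ is always $\ell_\infty$-heavy somewhere.

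\emph{Step 2 (finding it with membership queries).} If $D$ is $\lambda$-smooth, the Goldreich--Levin / Kushilevitz--Mansour algorithm --- which only needs to evaluate $F$ at adaptively chosen points (membership queries) and to evaluate $D$ --- recovers with probability $1-\delta'$ a short list containing every $a$ with $\abs{\E_{x\sim D}[F(x)\chi_a(x)]}\ge\theta$, using $\poly(n,1/\theta,\lambda,\log(1/\delta'))$ queries. Taking $\theta=\Theta(1/s)$ and returning the listed parity with the largest estimated $D$-correlation yields a $\gamma$-weak learner with $\gamma=\Omega(1/s)$.

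\emph{Step 3 (boosting, preserving smoothness).} Finally I would run a boosting algorithm: invoke the weak learner $\poly(1/\gamma,\log(1/\eps))$ times on a sequence of reweighted distributions $D_1,D_2,\dots$ and output a threshold of the sum of the returned parities; standard boosting analysis shows this $\eps$-approximates $f$ under the uniform distribution. The main obstacle is that a naive booster (e.g.\ AdaBoost) concentrates $D_t$ on the still-misclassified inputs, which could force $\lambda$ up to $2^{\Omega(n)}$ and wreck Step 2; the fix is a \emph{smooth} booster, which guarantees $2^nD_t(x)\le\poly(1/\eps)$ for all $t$, keeping the weak learner efficient. Composing the three steps and amplifying each randomized subroutine so the total failure probability is $\delta$ gives a learner making $\poly(s,n,1/\eps,\log(1/\delta))$ membership queries. (The output is a threshold of parities rather than a DNF, which is acceptable since the learning is improper.)
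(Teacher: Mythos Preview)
The paper does not prove this statement at all: it is quoted as a known result of Jackson~\cite{jackson1997efficient}, with no accompanying argument. (The ``Proof Sketch'' that follows in the paper is for the \emph{next} Corollary about width-$w$ DNFs, where \Cref{thm:main} is combined with Jackson's theorem; it is not a proof of Jackson's theorem itself.) So there is nothing in the paper to compare your proposal against.

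That said, your sketch is a correct outline of the (modern presentation of the) harmonic sieve. Step~1 is right, including the key observation that a width-$k$ term, viewed as a $0/1$ function, has Fourier $\ell_1$-norm exactly $1$, so a term with $D$-correlation $\Omega(1/s)$ forces some parity on its variables to have $D$-correlation $\Omega(1/s)$ as well, with no loss depending on $k$. Step~2 correctly reduces ``find $a$ with large $\E_D[F\chi_a]$'' to Goldreich--Levin/Kushilevitz--Mansour applied to the bounded function $x\mapsto F(x)\cdot 2^nD(x)$, which is where $\lambda$-smoothness enters. Step~3 correctly invokes a smooth booster so that every $D_t$ stays $\poly(1/\eps)$-smooth; this is the Klivans--Servedio simplification (the paper itself alludes to this with the remark that boosting can be restricted to $K=\eps^{-O(1)}$), whereas Jackson's original paper used Freund's boost-by-majority together with a direct argument that the weak learner still succeeds on the heavier distributions it produces. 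Either route yields $\poly(s,n,1/\eps,\log(1/\delta))$ membership queries, matching the cited theorem.
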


Using \Cref{thm:main}, we can extend Jackson's result to small-width DNFs. Note that the DNF sparsification bound from \cite{GopalanMR13,LovettZ19} also works here, if we replace the bound on $s$ with their corresponding bound.

\begin{corollary}[This work]
The class of $n$-variate DNFs of width $w$ is $(\eps,\delta)$-PAC learnable under the uniform distribution with $q=\poly(s,n,1/\eps,\log(1/\delta))$ membership queries, where $s=\pbra{2+\frac1w\log\frac1\eps}^{O(w)}$.
\end{corollary}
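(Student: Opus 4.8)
The plan is to combine the DNF sparsification bound of \Cref{cor:dnfsparsification} (which is what \Cref{thm:main} buys us) with the fact that the weak-learning step inside Jackson's harmonic sieve is robust to replacing the target DNF by a function close to it. Fix a width-$w$ DNF $f$ (viewed as a $\pm1$-valued function) together with parameters $\eps,\delta$. First I would apply \Cref{cor:dnfsparsification} at a suitably small error $\eps'$ to obtain a width-$w$ DNF $g$ of size $s':=\pbra{2+\frac1w\log\frac1{\eps'}}^{O(w)}$ with $\Pr_x[f(x)\neq g(x)]\le\eps'$; the right choice turns out to be $\eps'\approx c\eps^2/s'$ for a small absolute constant $c$. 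Since $\log\frac1{\eps'}=\log\frac1\eps+O(\log s')=\log\frac1\eps+O\pbra{w\log\pbra{2+\frac1w\log\frac1\eps}}$, a short computation shows this recursion is benign, i.e.\ $s'=\pbra{2+\frac1w\log\frac1\eps}^{O(w)}$ still, and we take this to be the $s$ in the statement.

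Next I would unpack the harmonic sieve: it is a smooth boosting algorithm that maintains distributions $D$ of density $\|2^nD\|_\infty=O(1/\eps)$ and, at each round, uses the Kushilevitz--Mansour/Goldreich--Levin procedure (which is where the membership queries are spent) to find a parity $\chi_a$ with $|\E_D[f\chi_a]|=\Omega(\gamma)$. The existence of such a parity for $f$ is where sparsification enters: Jackson's discriminator-based lemma guarantees that the $s'$-term DNF $g$ has a parity with $|\E_D[g\chi_a]|\ge\gamma:=\frac1{2s'+1}$ under \emph{every} distribution $D$, and then $\abs{\E_D[f\chi_a]-\E_D[g\chi_a]}\le 2\Pr_D[f\neq g]\le 2\|2^nD\|_\infty\cdot\eps'=O(\eps'/\eps)\le\gamma/2$ by our choice of $\eps'$. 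Hence $f$ itself is weakly learnable with advantage $\Omega(1/s)$ under each boosting distribution, so the sieve, run on query access to $f$ with size parameter $s$, accuracy $\eps$, and confidence $\delta$, outputs $h$ with $\Pr_x[h(x)\neq f(x)]\le\eps$ with probability at least $1-\delta$. Tracking the cost --- $O(\log(1/\eps)/\gamma^2)$ boosting rounds, each running the query-based Fourier procedure in time $\poly(n,1/\gamma,\|2^nD\|_\infty,\log(1/\delta))$ --- yields query complexity $\poly(s,n,1/\eps,\log(1/\delta))$.

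The main obstacle I expect is purely bookkeeping: the two-way dependence between the robustness requirement (that $\eps'$ sit comfortably below $\eps/s$) and the self-referential size bound (that $s$ be computed from $\eps'$), which must be resolved so that the final bound on $s$ is still $\pbra{2+\frac1w\log\frac1\eps}^{O(w)}$ and only the constant in the exponent degrades. Beyond that, the argument only needs sufficiently explicit smoothness and weak-learning bounds quoted from the harmonic-sieve literature; and if one instead plugs in the sparsification bounds of \cite{GopalanMR13} or \cite{LovettZ19}, the identical reasoning gives the statement with their (larger) value of $s$, as remarked in the corollary.
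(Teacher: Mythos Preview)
Your proposal is correct and follows the same overall route as the paper: sparsify the target width-$w$ DNF $f$ to a small DNF $g$, use the smoothness bound $\|2^nD\|_\infty=\eps^{-O(1)}$ on the boosting distributions to transfer Jackson's weak-learning guarantee from $g$ back to $f$, and then run the harmonic sieve with size parameter $s$.

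There is one implementation difference worth noting. The paper explicitly uses the structural feature of \Cref{thm:main} that $g$ is a \emph{sub-DNF} of $f$, hence $g(x)\le f(x)$ pointwise; with this monotonicity they can take the sparsification error to be simply $\gamma=\Theta(1/K)=\eps^{O(1)}$, independent of $s$, so no circularity arises and $s=\pbra{2+\frac1w\log\frac1\gamma}^{O(w)}=\pbra{2+\frac1w\log\frac1\eps}^{O(w)}$ directly. You instead transfer the parity correlation via the generic bound $\abs{\E_D[f\chi_a]-\E_D[g\chi_a]}\le 2\|2^nD\|_\infty\cdot\eps'$, which forces $\eps'\lesssim \eps/s'$ and creates the self-reference you flag; as you correctly argue, the recursion is benign because $\log s'=O\pbra{w\log\pbra{2+\frac1w\log\frac1{\eps'}}}$ only perturbs the base by a lower-order additive term. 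Both arguments land at the same bound; yours is a bit more portable (it would work with any sparsifier, not only one that outputs a sub-DNF), while the paper's avoids the bookkeeping entirely by exploiting $g\le f$.
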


\begin{proof}[Proof Sketch]
Jackson's algorithm combines a weak learner based on Fourier analysis and a boosting algorithm that converts this weak learner to a strong learner. Let $f(x)$ be the target DNF that we are trying to learn.
The weak learner solves the following problem: given a distribution $D$ on $\bin^n$, output a set $S$ such that the parity $\chi_S(x)=\bigoplus_{i \in S} x_i$ is correlated with $f$ under the distribution $D$. Initially $D$ is the uniform distribution, but the boosting algorithm keeps adapting $D$ to focus on inputs where it made many mistakes.

In Jackson's algorithm, the existence of such $S$ is shown by observing that for a size-$s$ DNF, at least one of the terms must be $1/s$ correlated to the function; and each term's contribution can be attributed to the parities supported on it. For width-$w$ terms, this leads to at most a $2^{-w}$ decrease in the correlation.

Assume now that $f(x)$ is a width-$w$ DNF with too many terms, so we cannot apply the previous argument directly. Apply \Cref{thm:main} with error $\gamma$ (to be determined soon), to obtain an approximate width-$w$ DNF $g(x)$ which $\gamma$-approximates $f(x)$, where $g$ has at most $s=\pbra{2+\frac1w\log\frac1{\gamma}}^{O(w)}$ terms. Crucially, we obtain $g(x)$ by removing some of the terms in $f(x)$, and hence $g(x) \le f(x)$ for all inputs $x$. In particular, $\Pr_{x \sim D}[f(x)=1] \ge \Pr_{x \sim D}[g(x)=1]$.

Assume that we know that the distribution $D$ is not too far from uniform. Concretely, that $D(x) \le K 2^{-n}$ for some parameter $K$. This implies that 
$$
\Pr_{x \sim D}[f(x)=1] \le \Pr_{x \sim D}[g(x)=1] + \gamma K.
$$
We will choose $\gamma=1/12K$. We may assume that $\Pr_{x \sim D}[f(x)=1] \in [1/3,2/3]$, otherwise the constant $1$ function correlates with $f$ under $D$. Thus $\Pr_{x \sim D}[g(x)=1] \in [1/4,3/4]$. This implies, by the same argument as in the original paper of Jackson, there there is a term $C$ of $g$ which is $\Omega(1 / s)$-correlated with $g$. One can verify that as $g(x) \le f(x)$, $C$ is also $\Omega(1 / s)$-correlated with $f$. 

Finally, we need to bound $K$. It is known (see for example \cite{klivans2003boosting}) that boosting algorithms can be restricted to have $K=\eps^{-O(1)}$, which completes the proof.
\end{proof}

\subsection{Proof overview}

We give a high-level overview of the proof of \Cref{thm:main}. Let $L=((C_i,v_i))$ be a decision list of width $w$ and size $m$.

\paragraph{General Framework.}
Given a subset $J \subset [m]$, we denote by $L|_J$ the decision list restricted to the rules in $J$, where we delete the rest. Our goal is to find a small subset $J \subset [m]$ such that $L|_J$ approximates $L$. We say that a rule $(C_i,v_i)$ of $L$ is \emph{hit} by an input $x$ if $C_i(x)=1$ and $C_j(x)=0$ for $j<i$; in this case, $L(x)=v_i$. The main intuition underlying our approach is:
\begin{center}\it
If a rule is rarely hit by random inputs, then we can safely remove it.
\end{center}
Armed with this intuition, our approach is to choose $J$ to be the set of rules with the highest probability of being hit. We show that in order to get an $\eps$-approximation, it suffices to keep the top $\pbra{2+\frac1w\log\frac1\eps}^{O(w)}$ rules.

Our general approach follows that of Lovett and Zhang \cite{LovettZ19}. They combined two central results in the analysis of boolean functions: \emph{random restrictions} and \emph{noise stability}. The main innovation in the current work is that we apply random restrictions that fix only a small fraction of the inputs; this is in contrast to the common use of random restrictions, such as in the proof of H{\aa}stad's switching lemma \cite{Hastad87}, where most variables are fixed. The ability to handle random restrictions which fix only a small fraction is what allows us to obtain improved bounds.

\paragraph{Mild random restrictions.}

An index $i \in [m]$ is said to be \emph{useful} if there exists an assignment $x$ such that the evaluation of $L(x)$ hits the $i$-th rule (and hence outputs $v_i$). We denote the number of useful indices in $L$ by $\usenum{L}$. This notion is natural, as we can always discard rules if no assignment hits them. The main point is that restrictions can render some rules in a decision list useless. Let $\rho$ be a random restriction that keeps each variable alive with probability $\alpha$. We show that on average, the restricted decision list $L\restriction_{\rho}$ has a small number of useful indices:
$$
\E_{\rho}\sbra{\usenum{L\restriction_\rho}}\leq \pbra{\frac{4}{1-\alpha}}^w.
$$

The proof is based on an encoding argument. Let $\rho$ be a restriction for which $L\restriction_{\rho}$ has $T$ useful indices. Let $t \in [T]$ be uniformly chosen. We construct a new restriction $\rho'$ by further restricting the variables in the $t$-th useful rule so that this rule is satisfied.
Then from $\rho'$ and some small additional information $a$, we can recover both $\rho$ and $t$. This shows that the probability of $T$ being too large is very low, as the entropy of $(\rho',a)$ is much lower than that of $(\rho,t)$.

\paragraph{Noise Stability.}
Since there is no guarantee about the value on each rule of the decision list, it is convenient to consider the following index function.
Let $L=((C_i,v_i))_{i\in[m]}$ be a decision list on $n$ variables. The index function of $L$ outputs for an input $x$ the index $i$ of the first term in $L$ satisfied by $x$. Equivalently, $\Ind L$ is given by the decision list $\Ind L =((C_i,i))_{i\in[m]}$.

We make two important definitions. What we \emph{want} to analyze are the quantities
$$
p_L(i):=\Pr_x\sbra{\Ind L(x)=i},
$$
where $x$ is taken from the uniform distribution of the input.
In particular, we want to show that there is a small set of indices $J$ such that $\sum_{i \in J} p_L(i) \ge 1-\eps$. What we \emph{can} analyze using random restrictions are the quantities
$$
q_L(\alpha,i)=\Pr_{\rho}\sbra{\text{index }i\text{ is useful in }L\restriction_\rho},
$$
since it holds that
$$
\sum_iq_L(\alpha,i)=\E_{\rho}\sbra{\usenum{L\restriction_\rho}} \le \pbra{\frac{4}{1-\alpha}}^w.
$$
We use noise stability to bridge between the two.

Let $\beta=1-\alpha$. For any $x\in\bin^n$, the noise distribution $y\sim\Ncal_\beta(x)$ is sampled by taking $\Pr\sbra{y_i=x_i}=\frac{1+\beta}2$ independently for $i\in[n]$. Consider sampling $x \in \bin^n$ uniformly and $y\sim\Ncal_\beta(x)$. We can equivalently sample the pair $(x,y)$ by first
sampling a common restriction $\rho$, where each variables stays alive with probability $\alpha$, and then sample its completion for $x$ and $y$ independently. Let
$$
\Stab_L(\beta,i):=\Pr_{x,y}\sbra{\Ind L(x)=\Ind L(y)=i}.
$$
We show that $p_L(i)$ and $q_L(\alpha,i)$ are both polynomially related, by relating them to $\Stab_L(\beta,i)$:
$$
\frac{p_L(i)^2}{q_L(1-\beta,i)}\leq\Stab_L(\beta,i)\leq p_L(i)^{\frac2{1+\beta}}.
$$
The upper bound is proven by hypercontrativity, and the lower bound by a somewhat delicate Cauchy-Schwarz inequality. This allows us to obtain that
$$
p_L(i)\leq q_L(1-\beta,i)^{\frac{1+\beta}{2\beta}}.
$$
Finally, we put everything together by optimizing the value of $\beta$.

\paragraph{Related works.}
We already discussed the works of Gopalan, Meka and Reingold \cite{GopalanMR13} and Lovett and Zhang \cite{LovettZ19} which gave weaker bounds for DNF sparsification than those in \Cref{thm:main}. 

There have been previous works studying how small-width DNFs simplify under mild random restrictions that fix a small fraction of the variables (say, $1\%$). Segerlind, Buss and Impagliazzo's work \cite{segerlind2004switching}, improved by Razborov \cite{razborov2015pseudorandom}, show that width-$w$ DNFs simplify to a decision tree of depth $2^{O(w)}$. We obtain bounds on size (namely, number of useful terms) in \Cref{thm:main}, which are better than bounds on depth. However, we only bound the first moment (that is, expected number of useful terms), while \cite{razborov2015pseudorandom} bounds higher moments as well. So to some extent, the results are incomparable. We believe that with some further work, one can improve our techniques to obtain bounds on higher moments as well (this was unnecessary for the current work).
Finally, it is also worthwhile to mention the work by the authors and Alweiss \cite{alweiss2019improved}, where mild random restrictions (of a somewhat different flavor) were used to obtain improved bounds for the sunflower lemma in combinatorics.

\paragraph{Paper Organization.}
In \Cref{sec:upper}, we prove the upper bound on decision list compression. In \Cref{sec:lower}, we give the lower bounds to show the tightness of our result. 

\paragraph{Acknowledgements.}
We thank Ben Rossman for invaluable discussions. We also thank Ryan Alweiss and the anonymous reviewers for helpful suggestions on an earlier version of this paper.

\section{Upper bounds}
\label{sec:upper}

We start by make some definitions formal. We denote $[n]=\cbra{1,2,\ldots,n}$, variables are $x_1,\ldots,x_n$, and literals are $x_1,\neg x_1,\ldots,x_n,\neg x_n$. A \emph{term} is a conjunction of literals.

\begin{definition}[Decision list]
A width-$w$ size-$m$ decision list is a list $L=((C_i,v_i))_{i\in[m]}$ of rules. A rule is a pair $(C_i,v_i)$, where $C_i$ is a term containing at most $w$ literals and each $v_i$ is a value in some finite set $V$. We assume $C_m \equiv 1$, and $(C_m,v_m)$ is the final default rule.

For any $J\subseteq[m]$ with $m\in J$, we denote by  $L|_J=((C_{j},v_{j}))_{j \in J}$ the restriction of $L$ to the rules in $J$, where elements of $J$ are taken in ascending order.
\end{definition}

The evaluation of $L$ given assignment $x$ is to find the first index $i$ such that $C_i(x)=1$ and then to output $L(x)=v_i$.
We make additional remarks for the decision list to avoid potential pitfalls.
\begin{itemize}
\item If $m\notin J$, we will consider $L|_J$ invalid, as it does not have a default rule at the end.
\item No variable appears in any single term more than once, which rules out $x_1\land x_1$ and $x_1\land\neg x_1$.
\end{itemize}

Our goal in this section is to prove the following theorem, which is the upper bound part in \Cref{thm:main}.

\begin{theorem}\label{thm:compress}
Let $L=((C_i,v_i))_{i\in[m]}$ be a width-$w$ decision list. Then for every $\eps>0$, there exists $J\subseteq[m],m\in J$ of size
$|J|=\pbra{2+\frac1w\log\frac1\eps}^{O(w)}$
such that $\Pr\sbra{L(x)\neq L|_J(x)}\leq\eps$.
\end{theorem}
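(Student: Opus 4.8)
The plan is to follow the three-step roadmap of the proof overview. Fix a width-$w$ decision list $L=((C_i,v_i))_{i\in[m]}$, recall $p_L(i)=\Pr_x[\Ind L(x)=i]$, and observe that $L$ and $L|_J$ agree on every $x$ with $\Ind L(x)\in J$; hence it suffices to find $J\ni m$ of the stated size with $\sum_{i\notin J}p_L(i)\le\eps$ (we may assume $\eps\le1$, else $J=\{m\}$ works). Alongside $p_L$, for $\beta\in(0,\tfrac12]$ I track $q_L(1-\beta,i)=\Pr_\rho[i\text{ is useful in }L\restriction_\rho]$, where $\rho$ restricts each variable independently with probability $\beta$ and keeps it alive with probability $1-\beta$. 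Summing over $i$ and invoking the mild random restriction lemma (the extension of \Cref{lemma:dnf_useful} to decision lists, giving $\E_\rho[\usenum{L\restriction_\rho}]\le(4/\beta)^w$) yields the global budget $\sum_i q_L(1-\beta,i)=\E_\rho[\usenum{L\restriction_\rho}]\le(4/\beta)^w=:Q$.

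The core is the pointwise estimate $p_L(i)\le q_L(1-\beta,i)^{(1+\beta)/(2\beta)}$, obtained by squeezing the noise stability $\Stab_L(\beta,i)=\Pr_{x,y}[\Ind L(x)=\Ind L(y)=i]$ (with $x$ uniform and $y\sim\Ncal_\beta(x)$) from both sides. Coupling $(x,y)$ by first drawing a common restriction $\rho$ (alive with probability $1-\beta$) and then completing the alive coordinates of $x$ and $y$ independently gives $\Stab_L(\beta,i)=\E_\rho[p_{L\restriction_\rho}(i)^2]$ and $p_L(i)=\E_\rho[p_{L\restriction_\rho}(i)]$. Since $p_{L\restriction_\rho}(i)>0$ forces rule $i$ to be useful in $L\restriction_\rho$, we have $p_{L\restriction_\rho}(i)=p_{L\restriction_\rho}(i)\cdot\mathbf{1}[i\text{ useful in }L\restriction_\rho]$, and Cauchy--Schwarz gives $p_L(i)^2\le\E_\rho[p_{L\restriction_\rho}(i)^2]\cdot\Pr_\rho[i\text{ useful in }L\restriction_\rho]=\Stab_L(\beta,i)\cdot q_L(1-\beta,i)$. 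For the other direction, $g=\mathbf{1}[\Ind L=i]$ is $\{0,1\}$-valued, so $\Stab_L(\beta,i)=\|T_{\sqrt\beta}g\|_2^2\le\|g\|_{1+\beta}^2=p_L(i)^{2/(1+\beta)}$ by the $(2,1+\beta)$-hypercontractive inequality. Combining the two bounds and simplifying exponents gives the claim, with $\gamma:=(1+\beta)/(2\beta)>1$.

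Next I pick a threshold $\tau\in(0,1]$ and set $J=\{m\}\cup\{i:q_L(1-\beta,i)\ge\tau\}$. Markov's inequality against the budget gives $|J|\le 1+Q/\tau$, while for $i\notin J$ we have $q_L(1-\beta,i)<\tau$, so the pointwise estimate (using $\gamma>1$) yields $p_L(i)\le q_L(1-\beta,i)^{\gamma}\le q_L(1-\beta,i)\cdot\tau^{\gamma-1}$ and hence $\sum_{i\notin J}p_L(i)\le\tau^{\gamma-1}\sum_i q_L(1-\beta,i)\le\tau^{\gamma-1}Q$. Choosing $\tau=(\eps/Q)^{1/(\gamma-1)}$ makes the error at most $\eps$ and gives $|J|\le 1+Q^{\gamma/(\gamma-1)}\eps^{-1/(\gamma-1)}=1+(4/\beta)^{w(1+\beta)/(1-\beta)}\,\eps^{-2\beta/(1-\beta)}$.

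Finally I optimize over $\beta$. When $\eps\ge 2^{-w}$, taking $\beta=\tfrac12$ gives $|J|\le 1+2^{9w}\eps^{-2}\le 2^{O(w)}$. When $\eps<2^{-w}$, writing $\ell=\log(1/\eps)>w$ and taking $\beta=\Theta(w/\ell)$ turns every factor $(1\pm\beta)$ in the exponents into $1+O(\beta)$, so that both exponents are $O(w)$ and both bases are $\Theta(\ell/w)$ and $O(1)$; a short computation then gives $|J|\le(\ell/w)^{O(w)}2^{O(w)}\le(2+\tfrac1w\log\tfrac1\eps)^{O(w)}$, and the two cases together give the theorem. I expect the genuine obstacle to be the mild random restriction lemma itself, whose proof is an encoding/entropy argument that recovers the pair $(\rho,t)$ — with $t$ indexing a uniformly chosen useful rule — from a further restriction $\rho'$ that forces that rule true, together with a short advice string, thereby bounding the probability that $\usenum{L\restriction_\rho}$ is large. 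Within the present argument, the one delicate point is arranging the coupling and the Cauchy--Schwarz step so that usefulness of rule $i$ in $L\restriction_\rho$ aligns exactly with positivity of $p_{L\restriction_\rho}(i)$.
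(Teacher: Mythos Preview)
Your proposal is correct and follows essentially the same approach as the paper: reduce to the index function, invoke the mild random restriction bound $\sum_i q_L(1-\beta,i)\le(4/\beta)^w$, and bridge $p_L$ and $q_L$ via the two-sided noise-stability estimate (Cauchy--Schwarz for the lower bound, hypercontractivity for the upper bound) to get $p_L(i)\le q_L(1-\beta,i)^{(1+\beta)/(2\beta)}$, then optimize $\beta$. The only difference is cosmetic: the paper selects $J$ as the top-$t$ indices under $p_L$ and then compares to the $q_L$-ordering via a rearrangement step, whereas you threshold $q_L$ directly at $\tau$ and bound $|J|$ by Markov and the tail by $\tau^{\gamma-1}Q$; both routes yield the same $|J|\le (4/\beta)^{w(1+\beta)/(1-\beta)}\eps^{-2\beta/(1-\beta)}$ up to constants, and your thresholding avoids the rearrangement inequality.
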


\subsection{Useful indices}
Since there is no guarantee about the value on each rule of the decision list, it is convenient to consider the index function.
Let $L=((C_i,v_i))_{i\in[m]}$ be a decision list on $n$ variables. The index function of $L$ is a function $\Ind L:\bin^n\to[m]$, given by
$$
\Ind L(x)=\min\cbra{i\in[m]\mid C_i(x)=1}.
$$
Equivalently, $\Ind L$ is given by the decision list $\Ind L =((C_i,i))_{i\in[m]}$. Using the index function, it suffices to discard some rules of $L$ and show it still approximates the index function.

\begin{claim}\label{clm:transfertoindex}
Let $L=((C_i,v_i))_{i\in[m]}$ be a decision list. Then for any $J\subseteq[m],m\in J$, we have
$$
\Pr\sbra{L(x)\neq L|_J(x)}\leq\Pr\sbra{\Ind L(x)\notin J}.
$$
\end{claim}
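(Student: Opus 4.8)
The plan is to prove the stronger pointwise statement that the event $L(x) \neq L|_J(x)$ is contained in the event $\Ind L(x) \notin J$; the probability inequality then follows by monotonicity. So I would fix an input $x$ and argue the contrapositive: if $\Ind L(x) \in J$, then $L(x) = L|_J(x)$.

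First I would note both evaluations are well-defined: since $C_m \equiv 1$ and $m \in J$, both $L$ and $L|_J$ have a default rule, so $\Ind L(x)$ and $\Ind L|_J(x)$ both exist. Now suppose $i := \Ind L(x) \in J$. By definition of the index function, $C_i(x) = 1$ and $C_j(x) = 0$ for every $j < i$. The key step is to identify which rule $L|_J$ hits on $x$: the rules of $L|_J$ are exactly $(C_j, v_j)$ for $j \in J$, taken in ascending order, and $L|_J(x)$ outputs $v_{j^*}$ where $j^*$ is the smallest element of $J$ with $C_{j^*}(x) = 1$. Since $i \in J$ and $C_i(x) = 1$, while every $j \in J$ with $j < i$ satisfies $C_j(x) = 0$ (as $j < i$), we get $j^* = i$. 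Hence $L|_J(x) = v_i = L(x)$.

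This establishes $\cbra{x : L(x) \neq L|_J(x)} \subseteq \cbra{x : \Ind L(x) \notin J}$, and taking probabilities over uniform $x$ gives the claim. There is essentially no real obstacle here; the only thing to be careful about is the bookkeeping with indices — namely that deleting rules with indices $j < i$ cannot change the rule hit at $x$, precisely because all such deleted rules had $C_j(x) = 0$ and so were never going to be hit in the first place. I would also remark in passing that the inclusion can be strict (e.g. when $x$ hits a removed rule $i \notin J$ but $v_i$ happens to coincide with the value $L|_J$ outputs), which is why the statement is phrased as an inequality rather than an equality.
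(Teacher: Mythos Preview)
Your proposal is correct and follows exactly the paper's approach: the paper's one-line proof simply observes that if $\Ind L(x)=j\in J$ then $L(x)=L|_J(x)=v_j$, and you have spelled out this contrapositive in detail.
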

\begin{proof}
This follows as if $\Ind L(x)=j\in J$, then $L(x)=L|_J(x)=v_j$.
\end{proof}

Obviously, if a rule of a decision list is covered by some previous rules, then we can safely remove it. For example, in $(x_1,1),(x_1\land x_2,2)$ the second rule is useless. To make this more formal, we introduce the following notion of a \emph{useful index}.

\begin{definition}[Useful index]
Given size-$m$ decision list $L$, an index $i\in[m]$ is said to be \emph{useful} if there exists an assignment $x$ such that $\Ind L(x)=i$.
We denote by $\usenum L$ the number of useful indices in $L$.
\end{definition}

\begin{example}
Assume $L=((x_1,a),(x_1\land\neg x_2,b),(1,c),(x_1,d),(1,e))$. Then indices $1,3$ are useful, but indices $2,4,5$ are not. So $\usenum L=2$.
\end{example}

The main intuition underlying our approach is that rules that are hardly hit by random inputs can be removed. Motivated by this, we define \emph{hit probability}
$$
p_L(i):=\Pr\sbra{\Ind L(x)=i}.
$$

\begin{claim}\label{clm:hitprobsum}
For any size-$m$ decision list $L$, we have $\sum_{i=1}^mp_L(i)=1$.
\end{claim}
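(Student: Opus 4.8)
The plan is to observe that $\Ind L$ is a well-defined total function from $\bin^n$ to $[m]$, after which the claim is just the statement that probabilities over a partition of the sample space sum to one.

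First I would check that for every assignment $x\in\bin^n$ the set $\cbra{i\in[m]\mid C_i(x)=1}$ is nonempty: since the definition of a decision list assumes $C_m\equiv 1$, the index $m$ always belongs to this set. Hence the minimum defining $\Ind L(x)$ is taken over a nonempty finite subset of $[m]$, so $\Ind L(x)$ exists and lies in $[m]$. This is the only nontrivial point, and it is precisely where the convention $C_m\equiv 1$ is used.

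Next, for $i\in[m]$ let $A_i:=\cbra{x\in\bin^n\mid \Ind L(x)=i}$. These events are pairwise disjoint, since a function assigns exactly one value to each point, and they cover $\bin^n$ by the previous paragraph. Thus $\cbra{A_i}_{i\in[m]}$ is a partition of $\bin^n$, and taking $x$ uniform on $\bin^n$,
$$
\sum_{i=1}^m p_L(i)=\sum_{i=1}^m\Pr\sbra{x\in A_i}=\Pr\sbra{x\in\bin^n}=1 .
$$

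There is no real obstacle here; the statement is essentially immediate once the total-definedness of $\Ind L$ is noted, and the argument works verbatim for any distribution on the inputs, not just the uniform one.
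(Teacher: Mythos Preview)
Your proof is correct and takes essentially the same approach as the paper, which simply notes that the events $[\Ind L(x)=i]$ partition the probability space. Your version is more explicit in verifying that $\Ind L$ is total (via $C_m\equiv 1$), but the underlying idea is identical.
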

\begin{proof}
This follows as the events $[\Ind L(x)=i]$ are a partition of the probability space.
\end{proof}

The following is our main technical lemma.

\begin{lemma}\label{lem:compress}
Let $L=((C_i,v_i))_{i\in[m]}$ be a width-$w$ decision list. Sort $[m]=\{j_1,\ldots,j_m\}$ such that
$p_L(j_1)\geq p_L(j_2)\geq\cdots\geq p_L(j_m)$. For any $\eps>0$, let
$$
t=\pbra{2+\frac1w\log\frac1\eps}^{O(w)}.
$$
Then for $J=\{j_1,\ldots,j_t,m\}$ it holds that $\Pr\sbra{\Ind L(x)\notin J}\leq\eps$.
\end{lemma}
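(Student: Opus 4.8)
The plan is to follow the three-part strategy laid out in the proof overview: (1) establish the mild random restriction bound $\E_\rho[\usenum{L\restriction_\rho}] \le (4/(1-\alpha))^w$, equivalently $\sum_i q_L(\alpha,i) \le (4/(1-\alpha))^w$ where $q_L(\alpha,i) = \Pr_\rho[i \text{ useful in } L\restriction_\rho]$ and $\rho$ keeps each variable alive with probability $\alpha$; (2) relate the hit probabilities $p_L(i)$ to these restriction quantities via noise stability, obtaining $p_L(i) \le q_L(1-\beta,i)^{(1+\beta)/(2\beta)}$ for every $\beta \in (0,1)$; and (3) combine these two facts and optimize over $\beta$ to conclude that keeping the top $t = (2 + \frac1w\log\frac1\eps)^{O(w)}$ rules captures all but $\eps$ of the probability mass. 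Since $\Ind L(x) \notin J$ means $\Ind L(x) \in \{j_{t+1}, \ldots, j_m\} \setminus \{m\}$, and $\Pr[\Ind L(x)\notin J] \le \sum_{k>t} p_L(j_k)$, it suffices to bound this tail sum.

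For step (1), I would run the encoding argument sketched in the overview. Fix a restriction $\rho$ for which $L\restriction_\rho$ has $T$ useful indices $i_1 < \cdots < i_T$; for each useful $i_\ell$ there is a completion of $\rho$ (of the free variables inside $C_{i_\ell}$) making the $i_\ell$-th rule the first satisfied one. Pick $t \in [T]$ uniformly and produce $\rho'$ by extending $\rho$ to satisfy $C_{i_t}$; this fixes at most $w$ additional variables (actually exactly the free literals of $C_{i_t}$, which is between $1$ and $w$ of them). The key claim is that from $\rho'$ together with $O(\log \text{(number of such variables)}) = O(w)$ bits of advice $a$ — say, which coordinates were just fixed and what the index $i_t$ is among the useful indices — one can recover $(\rho, t)$: knowing which variables were freshly assigned lets us un-fix them to recover $\rho$, and then $t$ is recoverable. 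A counting/entropy comparison between the distribution of $(\rho', a)$ and that of $(\rho, t)$ then forces $\E[\log T]$ (hence $\E[T]$, by convexity in the right direction) to be small; carefully done this yields $\E_\rho[T] \le (4/(1-\alpha))^w$. This is essentially Lemma~\ref{lemma:expectedusenum} / Lemma~\ref{lemma:dnf_useful} generalized to decision lists, which the excerpt says I may assume — so in the actual writeup I would just cite it, but the honest statement of the plan is that this is where the real combinatorial work lives.

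For step (2), fix an index $i$ and set $\beta = 1 - \alpha$. Sample $x$ uniform and $y \sim \Ncal_\beta(x)$; equivalently sample $\rho$ (alive with probability $\alpha$) and independent completions $x, y$. Define $\Stab_L(\beta,i) = \Pr_{x,y}[\Ind L(x) = \Ind L(y) = i]$. The upper bound $\Stab_L(\beta,i) \le p_L(i)^{2/(1+\beta)}$ is hypercontractivity applied to the indicator $\mathbf 1[\Ind L(x) = i]$ (the $(2, 1+\beta)$-hypercontractive inequality, i.e. the small-set-expansion form of the noise operator bound). For the lower bound $p_L(i)^2 / q_L(\alpha, i) \le \Stab_L(\beta,i)$: condition on $\rho$; given $\rho$, the event $\Ind L(x) = i$ depends only on the completion, and $\Pr[\Ind L(x) = i \mid \rho] > 0$ forces $i$ to be useful in $L\restriction_\rho$. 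Writing $r(\rho) = \Pr[\Ind L(x) = i \mid \rho]$, we have $p_L(i) = \E_\rho[r(\rho)]$, $q_L(\alpha, i) = \Pr_\rho[r(\rho) > 0] = \E_\rho[\mathbf 1[r(\rho)>0]]$, and $\Stab_L(\beta,i) = \E_\rho[r(\rho)^2]$; then Cauchy–Schwarz gives $p_L(i)^2 = \E_\rho[r(\rho)\cdot \mathbf 1[r(\rho)>0]]^2 \le \E_\rho[r(\rho)^2]\cdot\E_\rho[\mathbf 1[r(\rho)>0]] = \Stab_L(\beta,i)\cdot q_L(\alpha,i)$. Chaining the two bounds and rearranging yields $p_L(i) \le q_L(\alpha,i)^{(1+\beta)/(2\beta)}$.

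Finally, step (3): using the concavity of $z \mapsto z^{(1+\beta)/(2\beta)}$ would go the wrong way, so instead bound the tail directly. Since $\sum_i q_L(\alpha,i) \le (4/\beta)^w =: Q$, for the sorted ordering we get $q_L(j_k, \cdot) \le Q/k$ for the $k$-th largest (Markov-type), hence $p_L(j_k) \le (Q/k)^{(1+\beta)/(2\beta)}$; summing over $k > t$ and using that the exponent $\theta := (1+\beta)/(2\beta) > 1/2$ — in fact choosing $\beta$ so that $\theta$ is comfortably bigger than $1$, e.g. $\beta$ a small constant making $\theta \ge 2$ — the series $\sum_{k>t}(Q/k)^\theta$ converges and is $O(Q^\theta t^{1-\theta})$. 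Setting this $\le \eps$ and solving gives $t = O(Q^{\theta/(\theta-1)} \eps^{-1/(\theta-1)})$, and then one optimizes the choice of $\beta$: taking $\beta \approx \min(1, w/\log(1/\eps))$ balances the $Q^{O(1)} = (4/\beta)^{O(w)}$ factor against the $\eps^{-O(1)}$ factor and produces exactly $t = (2 + \frac1w\log\frac1\eps)^{O(w)}$. I expect the main obstacle to be step (1) — the encoding argument and its entropy bookkeeping, which must be done with enough care that the "$4$" (rather than a worse absolute constant) comes out and that it genuinely handles decision lists and not just DNFs — though since the excerpt states that lemma separately I would cite it; the secondary subtlety is making the $\beta$-optimization in step (3) clean across the two regimes $\eps \gtrless 2^{-w}$.
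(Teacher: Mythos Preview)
Your proposal is correct and follows essentially the same three-step approach as the paper: cite Lemma~\ref{lemma:expectedusenum} for the restriction bound, prove the bridging inequality $p_L(i)\le q_L(1-\beta,i)^{(1+\beta)/(2\beta)}$ via hypercontractivity for the upper bound and the Cauchy--Schwarz argument on $\E_\rho[r(\rho)^2]$ for the lower bound, then sum the tail and optimize $\beta$ (your choice $\beta\approx\min(1/2,\,w/\log(1/\eps))$ is exactly the paper's case split $\beta=1/2$ versus $\beta=1/\ell$ with $\eps=2^{-\ell w}$).

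One small slip in step~(3): you assert $q_L(\alpha,j_k)\le Q/k$ ``Markov-type,'' but the $j_k$ are sorted by $p_L$, not by $q_L$, so this does not follow directly. The paper handles it by introducing a second ordering $i_1,\dots,i_m$ sorted by $q_L(\alpha,\cdot)$, noting that $\sum_{k>t}p_L(j_k)\le\sum_{k>t}p_L(i_k)$ since the $j$'s carry the top-$t$ $p_L$-mass, and then bounding $p_L(i_k)\le q_L(\alpha,i_k)^\theta\le(Q/k)^\theta$. Your stated conclusion $p_L(j_k)\le(Q/k)^\theta$ happens to be true anyway---if it failed then $q_L(\alpha,j_\ell)\ge p_L(j_\ell)^{1/\theta}>Q/k$ for all $\ell\le k$, forcing $\sum_\ell q_L(\alpha,j_\ell)>Q$---so either patch works and the rest of your argument goes through unchanged.
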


The proof of \Cref{thm:compress} follows immediately, by combining \Cref{lem:compress} and \Cref{clm:transfertoindex}.

\subsection{Random restrictions and encoding}

A \emph{restriction} on $n$ variables is $\rho\in\cbra{0,1,*}^n$.
An $(n,k)$-random restriction is the uniform distribution over restrictions $\rho\in\cbra{0,1,*}^n$ with exactly $k$ stars, which we denote by $\Rscr(n,k)$.
An $(n,\alpha)$-random restriction, which we denote by $\Uscr(n,\alpha)$, assigns independently each bit of the restriction $\rho$ to $0,1,*$ with probability $\frac{1-\alpha}2,\frac{1-\alpha}2,\alpha$ respectively.
Given a decision list $L:\bin^n\to V$, its restriction under $\rho$ is $L\restriction_\rho:\bin^{\rho^{-1}(*)}\to V$.

\begin{definition}[Useful probability]
Given size-$m$ decision list $L$ and $\alpha\in(0,1)$, the useful probability of an index $i\in[m]$ is
$$
q_L(\alpha,i):=\Pr_{\rho\sim\Uscr(n,\alpha)}\sbra{\text{index }i\text{ is useful in }L\restriction_\rho}.
$$
\end{definition}

Note that we assume $L$ initially does not contain useless rules, so for any $\alpha$ and $i$, we always have $q_L(\alpha,i)>0$.
We also have the following simple fact regarding useful probability.

\begin{claim}\label{clm:usefulprobsum}
For any size-$m$ decision list $L$, we have $\sum_{i=1}^mq_L(\alpha,i)=\E_{\rho\sim\Uscr(n,\alpha)}\sbra{\usenum{L\restriction_\rho}}$.
\end{claim}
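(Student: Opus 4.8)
The plan is to prove this identity by linearity of expectation, after rewriting the count of useful indices as a sum of $0/1$-valued random variables. The key observation is that a restriction only simplifies the terms of a decision list; it never deletes a rule. So for any fixed $\rho\in\cbra{0,1,*}^n$, the restricted list $L\restriction_\rho$ has exactly the same index set $[m]$ as $L$, and the predicate ``index $i$ is useful in $L\restriction_\rho$'' is well-defined for every $i\in[m]$. Consequently, for each fixed $\rho$ we have the pointwise identity
$$
\usenum{L\restriction_\rho}=\sum_{i=1}^m\mathbf{1}\sbra{\text{index }i\text{ is useful in }L\restriction_\rho}.
$$

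Now I would take the expectation of both sides over $\rho\sim\Uscr(n,\alpha)$ and exchange the finite sum with the expectation:
$$
\E_{\rho\sim\Uscr(n,\alpha)}\sbra{\usenum{L\restriction_\rho}}=\sum_{i=1}^m\E_{\rho\sim\Uscr(n,\alpha)}\sbra{\mathbf{1}\sbra{\text{index }i\text{ is useful in }L\restriction_\rho}}=\sum_{i=1}^m\Pr_{\rho\sim\Uscr(n,\alpha)}\sbra{\text{index }i\text{ is useful in }L\restriction_\rho},
$$
and the last sum is exactly $\sum_{i=1}^m q_L(\alpha,i)$ by the definition of useful probability. This completes the argument. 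There is essentially no obstacle here; the only point worth making explicit is the one used above — that even when restricting variables makes some term $C_i\restriction_\rho$ unsatisfiable (or otherwise dominated by earlier rules), the corresponding rule is still formally present in $L\restriction_\rho$ with index $i$, it is merely no longer useful, which is precisely what the indicator $\mathbf{1}\sbra{\text{index }i\text{ is useful in }L\restriction_\rho}$ records.
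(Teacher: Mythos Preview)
Your proof is correct and follows essentially the same approach as the paper: both introduce the indicator of ``index $i$ is useful in $L\restriction_\rho$'' and apply linearity of expectation. Your added remark that restriction keeps the index set intact is a helpful clarification but not a substantive departure.
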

\begin{proof}
Let $1_{\rho,i}$ be the indicator of index $i$ being useful in $L\restriction_\rho$. Then
$$
\E_{\rho\sim\Uscr(n,\alpha)}\sbra{\usenum{L\restriction_\rho}}=\E_\rho\sbra{\sum_{i=1}^m 1_{\rho,i}}=\sum_{i=1}^m\E_\rho\sbra{1_{\rho,i}}=\sum_{i=1}^m q_L(\alpha,i).
$$
\end{proof}

Now we present an encoding/decoding scheme for random restriction and analyze the expectation in \Cref{clm:usefulprobsum} explicitly. Let $\alpha \in (0,1)$ be such that $\alpha n$ is an integer. Define:
\begin{align*}
&\Ucal:=\cbra{(\rho,s)\bigg|\rho\in\Rscr(n,\alpha n),s \in \{1,\ldots,\usenum{L\restriction_\rho}\}}\\
&\Vcal:=\cbra{(\rho',a)\bigg|\rho'\in\bigcup_{k=0}^w\Rscr(n,\alpha n-k),a\in\cbra{\textsc{Old},\textsc{New}}^w}.
\end{align*}

We define two deterministic algorithms $\Enc:\Ucal\to\Vcal$ and $\Dec:\Enc(\Ucal)\subseteq\Vcal\to\Ucal$ such that $\Dec(\Enc(\rho,s))=(\rho,s)$ holds for any $(\rho,s)\in\Ucal$.

\begin{algorithm}[H]
    \caption{Encoding algorithm $\Enc(\rho,s)$}\label{alg:singleEnc}
    \DontPrintSemicolon
    \LinesNumbered
    \KwIn{restriction and index $(\rho,s)\in\Ucal$}
    \KwOut{restriction and string $(\rho',a)\in\Vcal$}
    $I\gets\cbra{i\mid i\text{ is a useful index in }L\restriction_\rho}$\;
    $j\gets\text{the $s$-th element in }I$\;
    $\rho'\gets\rho,a\gets\varnothing$\;
    \tcc{Assume $C_j=\bigwedge_{k=1}^cy_{j_k},y_{j_k}\in\cbra{x_{j_k},\neg x_{j_k}},c\leq w$}
    \For{$k=1$ \KwTo $c$}{
        \eIf{$\rho(x_{j_k})\in\bin$}{
            Append $a$ with $\textsc{Old}$\tcc*{$x_{j_k}$ is already set by $\rho$}
        }{
            Append $a$ with $\textsc{New}$\tcc*{$x_{j_k}$ is newly set to satisfy this term}
            \leIf{$y_{j_k}=x_{j_k}$}{Update $\rho'(x_{j_k})\gets1$}{Update $\rho'(x_{j_k})\gets0$}
        }
        Complete $a$ arbitrarily to length $w$\;
    }
    \Return{$(\rho',a)$}
\end{algorithm}

\begin{algorithm}[ht]
    \caption{Decoding algorithm $\Dec(\rho',a)$}\label{alg:singleDec}
    \DontPrintSemicolon
    \LinesNumbered
    \KwIn{restriction and string $(\rho',a)\in\Enc(\Ucal)\subseteq\Vcal$}
    \KwOut{restriction and index $(\rho,s)\in\Ucal$}
    $j\gets\text{index of the first satisfied term in }L\restriction_{\rho'}$\;
    $\rho\gets\rho'$\;
    \tcc{Assume $C_j=\bigwedge_{k=1}^cy_{j_k},y_{j_k}\in\cbra{x_{j_k},\neg x_{j_k}},c\leq w$}
    \For{$k=1$ \KwTo $c$}{
        \If(\tcc*[f]{$x_{j_k}$ was not set by $\rho$}){$a_k=\textsc{New}$}{
            Update $\rho(x_{j_k})\gets*$\;
        }
    }
    $I\gets\cbra{i\mid i\text{ is a useful index in }L\restriction_\rho}$\;
    $s\gets\text{rank of $j$ in $I$}$\;
    \Return{$(\rho,s)$}
\end{algorithm}

The following claim proves the correctness of the encoding and decoding algorithms.

\begin{claim}\label{claim:singleEncDec}
$\Dec(\Enc(\rho,s))=(\rho,s)$ holds for any $(\rho,s)\in\Ucal$.
\end{claim}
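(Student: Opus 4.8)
The plan is to verify that the decoding algorithm inverts the encoding algorithm by tracking the relevant quantities through both procedures. Fix $(\rho,s)\in\Ucal$. Run $\Enc(\rho,s)$: let $I$ be the set of useful indices of $L\restriction_\rho$, let $j$ be the $s$-th element of $I$, and let $(\rho',a)$ be the output. The key structural fact I would establish first is that in $L\restriction_{\rho'}$, the term $C_j$ becomes identically true: indeed, by construction $\rho'$ agrees with $\rho$ except that it newly sets each alive variable appearing in $C_j$ to the value that satisfies the corresponding literal $y_{j_k}$; the variables of $C_j$ already fixed by $\rho$ must already satisfy their literals, since $j$ is useful in $L\restriction_\rho$ (so some completion of $\rho$ hits rule $j$, forcing $C_j$ to be consistent with $\rho$). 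Hence $C_j\restriction_{\rho'}\equiv 1$.

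Next I would argue that $j$ is in fact the \emph{first} satisfied term in $L\restriction_{\rho'}$, which is what the first line of $\Dec$ recovers. Since $j$ is useful in $L\restriction_\rho$, no term $C_i$ with $i<j$ is satisfied by the particular completion of $\rho$ that hits $j$; but I need the stronger statement that no $C_i\restriction_{\rho'}$ with $i<j$ is identically... no — rather, that $C_i\restriction_{\rho'}$ is not forced true. This holds because $\rho'$ extends $\rho$ only on variables of $C_j$, and on the completion witnessing usefulness of $j$ those variables already took the values $\rho'$ assigns; so that completion is still a legal completion of $\rho'$, and under it every $C_i$ with $i<j$ evaluates to $0$. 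Therefore the first index whose term is satisfied in $L\restriction_{\rho'}$ is exactly $j$. Thus $\Dec$ correctly identifies $j$ on its first line, and then reading $a_1,\ldots,a_c$ (where $c$ is the width of $C_j$, known once $j$ is known) it un-sets precisely the \textsc{New} variables, i.e. exactly the variables of $C_j$ that $\rho$ left alive, restoring them to $*$. This recovers $\rho$ exactly.

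Finally, once $\rho$ is recovered, $\Dec$ recomputes $I=\{i\mid i\text{ useful in }L\restriction_\rho\}$ identically to $\Enc$, and returns the rank of $j$ in $I$, which is $s$ by definition of $j$. Hence $\Dec(\Enc(\rho,s))=(\rho,s)$. I would also note in passing that the output does lie in $\Vcal$: $\rho'$ has $\alpha n - k$ stars where $k\le c\le w$ is the number of \textsc{New} symbols, and $a\in\{\textsc{Old},\textsc{New}\}^w$ since $a$ is padded to length $w$; this is what makes the subsequent counting argument go through.

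The main obstacle is the second step: one must be careful that extending $\rho$ to $\rho'$ on the variables of $C_j$ does not accidentally satisfy some earlier term $C_i$, $i<j$. The clean way to see this is to keep a fixed witnessing completion $x^*$ of $\rho$ with $\Ind L\restriction_\rho(x^*)=j$ in hand, observe that $x^*$ remains a completion of $\rho'$ (since $\rho'$ only pins down coordinates on which $x^*$ already matched), and conclude that $x^*$ witnesses $\Ind L\restriction_{\rho'}=j$ as well — giving simultaneously that $C_j$ is satisfiable after $\rho'$ and that no earlier term is forced. Everything else is bookkeeping about which coordinates are \textsc{Old} versus \textsc{New}.
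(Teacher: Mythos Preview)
Your proposal is correct and follows essentially the same approach as the paper: verify that $\Dec$ recovers the same $j$ (because no earlier term can be forced true by $\rho'$), then that un-setting the \textsc{New} coordinates restores $\rho$, then that the rank of $j$ in $I$ is $s$. Your explicit use of a witnessing completion $x^*$ with $\Ind L\restriction_\rho(x^*)=j$ to argue that $x^*$ remains a completion of $\rho'$, and hence that no $C_i$ with $i<j$ is forced true by $\rho'$, is in fact a cleaner justification of the key step than the paper's one-line assertion.
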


\begin{proof}

Sort literals in each term of $L=((C_i,v_i))_{i\in[m]}$ arbitrarily.
To justify the correctness, let $(\rho',a)=\Enc(\rho,s)$, then we need to ensure:
\begin{itemize}
\item $\Dec(\rho',a)$ obtains the same $j$ in line 1 as $\Enc(\rho,s)$ does in line 2:

During $\Enc(\rho,s)$, index $j$ is useful in $L\restriction_\rho$, thus setting unfixed variables to satisfy $C_j$ will not make any term $C_i$ for $i<j$ satisfied. Hence the first satisfied term in $L\restriction_{\rho'}$ is $C_j$.

\item $\Dec(\rho',a)$ in line 8 obtains the correct $\rho$:

Since each term is sorted in advance, and $a$ encodes which variable in $C_j$ is set by $\Enc(\rho,s)$ rather than $\rho$, the loop in $\Dec(\rho',a)$ will set these variables back to $*$ and recover $\rho$.
\end{itemize}
\end{proof}

\begin{corollary}
\label{cor:UV}
$|\Ucal| \le |\Vcal|$.
\end{corollary}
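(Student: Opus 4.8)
The plan is to deduce this immediately from the correctness of the encoding/decoding pair established in \Cref{claim:singleEncDec}. The key observation is that a map with a left inverse is injective: since $\Dec(\Enc(\rho,s))=(\rho,s)$ for every $(\rho,s)\in\Ucal$, the encoding map $\Enc:\Ucal\to\Vcal$ must be injective. Indeed, if $\Enc(\rho_1,s_1)=\Enc(\rho_2,s_2)$, then applying $\Dec$ to both sides yields $(\rho_1,s_1)=(\rho_2,s_2)$.

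Once injectivity of $\Enc$ is in hand, the cardinality bound $|\Ucal|\le|\Vcal|$ is the standard pigeonhole statement about injections between finite sets, and both $\Ucal$ and $\Vcal$ are finite by construction (finitely many restrictions on $n$ variables, and the index $s$ and string $a$ range over finite sets). So the only thing to check is that $\Enc$ indeed maps into $\Vcal$, i.e.\ that the output $(\rho',a)$ always satisfies $\rho'\in\bigcup_{k=0}^w\Rscr(n,\alpha n-k)$ and $a\in\cbra{\textsc{Old},\textsc{New}}^w$; this is clear from \Cref{alg:singleEnc}, since the loop runs at most $c\le w$ times, each \textsc{New} step removes exactly one star (turning a $*$ into $0$ or $1$), and $a$ is explicitly padded to length $w$.

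I do not anticipate any real obstacle here — the entire content has been pushed into \Cref{claim:singleEncDec}, and this corollary is just the bookkeeping step that turns ``$\Dec$ inverts $\Enc$'' into a counting inequality. The one place to be slightly careful is to not overclaim: we only get $|\Ucal|\le|\Vcal|$, not equality, because $\Enc$ need not be surjective ($\Dec$ is only defined on $\Enc(\Ucal)$, and not every $(\rho',a)\in\Vcal$ need arise as an encoding). But that is exactly the inequality we want.

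\begin{proof}
By \Cref{claim:singleEncDec}, $\Dec\circ\Enc$ is the identity on $\Ucal$, so $\Enc:\Ucal\to\Vcal$ is injective. Since $\Ucal$ and $\Vcal$ are finite sets, this gives $|\Ucal|\le|\Vcal|$.
\end{proof}
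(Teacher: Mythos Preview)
Your proof is correct and matches the paper's approach exactly: the paper's one-line proof simply states that $\Enc$ is an injection from $\Ucal$ into $\Vcal$, which is precisely what you deduce from \Cref{claim:singleEncDec}.
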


\begin{proof}
$\Enc$ is an injection from $\Ucal$ to $\Enc(\Ucal) \subset \Vcal$.
\end{proof}

\begin{lemma}\label{lemma:expectedusenum}
Let $L$ be a width-$w$ decision list on $n$ variables and let $\alpha\in(0,1)$. Then
$$
\E_{\rho\sim\Uscr(n,\alpha)}\sbra{\usenum{L\restriction_\rho}}\leq\pbra{\frac4{1-\alpha}}^w.
$$
\end{lemma}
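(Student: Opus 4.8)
The plan is to run the encoding/decoding scheme of the previous subsection in its probability-weighted form, directly for $\Uscr(n,\alpha)$, so that no integrality assumption on $\alpha n$ is needed. I would begin by expanding the expectation as a weighted analogue of $\Ucal$:
$$
\E_{\rho\sim\Uscr(n,\alpha)}\sbra{\usenum{L\restriction_\rho}}=\sum_{(\rho,s)}\Pr_{\Uscr(n,\alpha)}\sbra{\rho},
$$
where $(\rho,s)$ ranges over all pairs with $\rho\in\cbra{0,1,*}^n$ and $1\le s\le\usenum{L\restriction_\rho}$, and each pair carries the $\Uscr(n,\alpha)$-probability of its restriction.

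Next I would reuse the maps $\Enc,\Dec$ of \Cref{alg:singleEnc,alg:singleDec} verbatim. The proof of \Cref{claim:singleEncDec} never uses the number of stars of $\rho$, only that the chosen index is useful, so $\Enc$ is still injective on this larger domain, and I may re-index the sum by its image. The one quantitative fact I need is that if $(\rho',a)=\Enc(\rho,s)$, then $\rho$ is obtained back from $\rho'$ by turning into $*$ exactly the coordinates labeled $\textsc{New}$ in $a$ and no others; so if $a$ has $j$ labels $\textsc{New}$ (necessarily $0\le j\le c\le w$, where $c$ is the width of the first satisfied term of $L\restriction_{\rho'}$), then $\rho$ has exactly $j$ more stars than $\rho'$ and agrees with it elsewhere, whence
$$
\Pr_{\Uscr(n,\alpha)}\sbra{\rho}=\pbra{\frac{\alpha}{(1-\alpha)/2}}^{j}\Pr_{\Uscr(n,\alpha)}\sbra{\rho'}=\pbra{\frac{2\alpha}{1-\alpha}}^{j}\Pr_{\Uscr(n,\alpha)}\sbra{\rho'}.
$$

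Substituting this and then loosening the sum to run over all restrictions $\rho'$ and all strings $a\in\cbra{\textsc{Old},\textsc{New}}^w$ gives
$$
\E_{\rho\sim\Uscr(n,\alpha)}\sbra{\usenum{L\restriction_\rho}}\le\sum_{\rho'}\Pr_{\Uscr(n,\alpha)}\sbra{\rho'}\sum_{a\in\cbra{\textsc{Old},\textsc{New}}^w}\pbra{\frac{2\alpha}{1-\alpha}}^{j(\rho',a)}.
$$
For a fixed $\rho'$, the exponent $j(\rho',a)$ depends only on the first $c=c(\rho')\le w$ symbols of $a$, so the inner sum factors as $2^{w-c}\pbra{1+\tfrac{2\alpha}{1-\alpha}}^{c}=2^{w-c}\pbra{\tfrac{1+\alpha}{1-\alpha}}^{c}$, which is at most $\pbra{\tfrac{2}{1-\alpha}}^{w}$ since $2\le\tfrac{2}{1-\alpha}$ and $\tfrac{1+\alpha}{1-\alpha}\le\tfrac{2}{1-\alpha}$. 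As $\sum_{\rho'}\Pr_{\Uscr(n,\alpha)}\sbra{\rho'}=1$, this yields $\E_{\rho\sim\Uscr(n,\alpha)}\sbra{\usenum{L\restriction_\rho}}\le\pbra{\tfrac{2}{1-\alpha}}^{w}\le\pbra{\tfrac{4}{1-\alpha}}^{w}$, which is the claimed bound (in fact with a better constant).

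The only delicate point I anticipate is the bookkeeping in the second paragraph — that the preimage $\rho$ and the encoded restriction $\rho'$ differ on precisely the $\textsc{New}$-labeled coordinates, which is essentially already contained in the correctness proof of $\Enc/\Dec$, together with noting that the arbitrary padding of $a$ to length $w$ is harmless because $\Dec$ never reads beyond position $c$. As an alternative, for $\alpha$ with $\alpha n$ an integer one can instead combine \Cref{cor:UV} with the binomial estimate $\binom{n}{\alpha n-k}/\binom{n}{\alpha n}\le(\alpha/(1-\alpha))^k$ and the elementary inequality $\sum_{k=0}^{w}r^k\le(r+2)^w$ (with $r=\tfrac{2\alpha}{1-\alpha}$, so $r+2=\tfrac{2}{1-\alpha}$) to get $\pbra{\tfrac{4}{1-\alpha}}^{w}$ for $\Rscr(n,\alpha n)$; but converting that fixed-number-of-stars bound into the $\Uscr(n,\alpha)$ statement of the lemma requires controlling the fluctuation in the number of stars, which is awkward, so I would present the weighted argument above as the actual proof.
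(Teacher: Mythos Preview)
Your argument is correct and in fact yields the sharper bound $\pbra{\tfrac{2}{1-\alpha}}^{w}$. The key observation---that passing from $\rho$ to $\rho'$ changes the $\Uscr(n,\alpha)$-weight by exactly $\pbra{\tfrac{2\alpha}{1-\alpha}}^{j}$, so one can push the injection through as a weighted inequality---is clean and avoids any integrality constraint. The bookkeeping you flag (that $j(\rho',a)$ is determined by the first $c(\rho')$ symbols of $a$, and that $c(\rho')$ is always well-defined because $C_m\equiv 1$) is exactly right.

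The paper takes a different route: it first proves the bound for $\Rscr(n,\alpha n)$ by the pure counting inequality $|\Ucal|\le|\Vcal|$ (this is where the factor $4$ rather than $2$ enters, via a somewhat loose binomial estimate), and then---rather than controlling fluctuations in the number of stars as you anticipated---it pads $L$ with dummy variables and lets $n'\to\infty$, so that $\Rscr(n',\alpha n')$ restricted to the original $n$ coordinates converges to $\Uscr(n,\alpha)$. Your weighted approach is more direct and gives a better constant; the paper's approach keeps the encoding argument purely combinatorial and isolates the passage to product measure into a separate limiting step, which is a reusable trick you may want to remember.
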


\begin{proof}
We first prove the bound for $\rho\sim\Rscr(n,\alpha n)$ and then increase the number of variables to infinity, by adding dummy variables. This proves the desired bound as for $n' \to \infty$, the restriction of $\Rscr(n',\alpha n')$ to the first $n$ variables converges to $\Uscr(n,\alpha)$. We have
\begin{align*}
&\E_{\rho\sim\Rscr(n,\alpha n)}\sbra{\usenum{L\restriction_\rho}}=\frac1{\abs{\Rscr(n,\alpha n)}}\sum_{\rho\in\Rscr(n,\alpha n)}\usenum{L\restriction_\rho}\\
=&\frac{|\Ucal|}{\abs{\Rscr(n,\alpha n)}}\leq\frac{|\Vcal|}{\abs{\Rscr(n,\alpha n)}}\leq\frac{\pbra{\sum_{k=0}^w\binom{n}{\alpha n-k}2^{(1-\alpha)n+k}}\times2^w}{\binom n{\alpha n}2^{(1-\alpha)n}}\\
\leq&\frac{\pbra{\sum_{k=0}^w\binom{n}{\alpha n-k}}\times4^w}{\binom n{\alpha n}}\leq\frac{\binom{n+w}{\alpha n}\times4^w}{\binom n{\alpha n}}\leq\pbra{\frac4{1-\alpha}}^w.
\end{align*}
\end{proof}

\subsection{Noise stability}

We use noise stability as a bridge between $p_L(i)$ and $q_L(\alpha,i)$.

\begin{definition}[Noisy distribution]
Given $x\in\bin^n$ and a noise parameter $\beta\in(0,1)$, we denote by $\Ncal_\beta(x)$ the distribution over $y\in\bin^n$, where $\Pr\sbra{y_i=x_i}=\frac{1+\beta}2,\Pr\sbra{y_i\neq x_i}=\frac{1-\beta}2$ independently for all $i\in[n]$.
\end{definition}

\begin{definition}[Stability]
Let $g:\bin^n\to\bin$ be a boolean function. The $\beta$-stability of $g$ is
$$
\Stab_\beta(g)=\Pr_{x \in \{0,1\}^n, y\sim\Ncal_\beta(x)}\sbra{g(x)=g(y)=1}.
$$
\end{definition}

The hypercontractive inequality (see for example \cite{o2014analysis}, page 259) allows us to bound the stability of a boolean function by its acceptance rate.

\begin{fact}\label{fct:hypercontractivity}
Let $g:\bin^n \to \bin$ and $\beta\in(0,1)$. Then $\Stab_\beta(g)\leq\pbra{\Pr\sbra{g(x)=1}}^{\frac2{1+\beta}}$.
\end{fact}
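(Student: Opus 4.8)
The plan is to reduce the statement to the standard $(2,q)$-hypercontractive inequality for the noise operator on the Boolean cube, so that this \emph{fact} becomes essentially a one-line consequence (as the cited reference suggests). First I would introduce the noise operator $T_\beta$, defined by $(T_\beta h)(x)=\E_{y\sim\Ncal_\beta(x)}\sbra{h(y)}$ for $h:\bin^n\to\mathbb R$, and observe that since $g$ is $\bin$-valued the event $\cbra{g(x)=g(y)=1}$ is equivalent to $\cbra{g(x)g(y)=1}$, hence
$$
\Stab_\beta(g)=\E_x\E_{y\sim\Ncal_\beta(x)}\sbra{g(x)g(y)}=\E_x\sbra{g(x)\cdot(T_\beta g)(x)}=\langle g,T_\beta g\rangle,
$$
where the inner product is with respect to the uniform measure on $\bin^n$.

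Next I would use the semigroup property $T_\beta=T_{\sqrt\beta}\circ T_{\sqrt\beta}$ together with the self-adjointness of $T_{\sqrt\beta}$ to rewrite $\langle g,T_\beta g\rangle=\langle T_{\sqrt\beta}g,\,T_{\sqrt\beta}g\rangle=\|T_{\sqrt\beta}g\|_2^2$. At this point the key input is the hypercontractive inequality: for $1\le p\le q$ and $0\le\rho\le\sqrt{(p-1)/(q-1)}$ one has $\|T_\rho h\|_q\le\|h\|_p$ for all $h$. Applying it with $q=2$, $\rho=\sqrt\beta$, and $p=1+\beta$ — the borderline choice, since then $\sqrt{(p-1)/(q-1)}=\sqrt\beta=\rho$ — yields $\|T_{\sqrt\beta}g\|_2\le\|g\|_{1+\beta}$, and therefore $\Stab_\beta(g)\le\|g\|_{1+\beta}^2$.

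Finally, since $g$ takes values in $\bin$ we have $|g|^{1+\beta}=g$ pointwise, so $\|g\|_{1+\beta}^2=\pbra{\E\sbra{g}}^{2/(1+\beta)}=\pbra{\Pr\sbra{g(x)=1}}^{2/(1+\beta)}$, which is exactly the claimed bound. I do not expect any genuine obstacle here: the only points requiring a moment of care are invoking the hypercontractive inequality in the correct $(2,q)$-form, checking the parameter constraint $\rho\le\sqrt{(p-1)/(q-1)}$ at the chosen values, and noting that $p=1+\beta\ge1$ (which holds as $\beta>0$) so that $\|\cdot\|_{1+\beta}$ is a genuine norm and the inequality applies.
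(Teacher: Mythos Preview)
Your proposal is correct and is precisely the standard derivation: rewrite $\Stab_\beta(g)=\|T_{\sqrt\beta}g\|_2^2$ via self-adjointness and the semigroup property, then apply the $(1+\beta,2)$-hypercontractive inequality at the borderline parameter $\rho=\sqrt\beta$, and finally use that $g$ is $\bin$-valued to convert $\|g\|_{1+\beta}^2$ into $\pbra{\Pr[g=1]}^{2/(1+\beta)}$. The paper does not give its own proof of this fact but simply cites \cite{o2014analysis}; your argument is exactly the one found there, so there is nothing to compare.
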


Next, we define index stability and relate it to useful probability $q_L(\cdot,\cdot)$ and hit probability $p_L(\cdot)$.

\begin{definition}[Index stability]
Given a size-$m$ decision list $L$ on $n$ variables, the $\beta$-stability of index $i\in[m]$ is
$$
\Stab_L(\beta,i):=\Pr_{x \in \{0,1\}^n, y\sim\Ncal_\beta(x)}\sbra{\Ind L(x)=\Ind L(y)=i}.
$$
\end{definition}

\begin{lemma}[Bridging lemma]\label{lem:bridging}
Let $L$ be a size-$m$ width-$w$ decision list on $n$ variables. Then for any index $i\in[m]$ and $\beta\in(0,1)$, we have
$$
\frac{p_L(i)^2}{q_L(1-\beta,i)}\leq\Stab_L(\beta,i)\leq p_L(i)^{\frac2{1+\beta}}.
$$
\end{lemma}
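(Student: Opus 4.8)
The plan is to establish the two inequalities separately, keeping $i$ and $\beta$ fixed throughout. Write $\alpha = 1-\beta$ and recall the coupled sampling: to draw $x$ uniform and $y \sim \Ncal_\beta(x)$, first draw a restriction $\rho \sim \Uscr(n,\alpha)$ (each coordinate alive with probability $\alpha = 1-\beta$), then independently complete the $*$'s to get $x$ and $y$. Under this coupling, $\Stab_L(\beta,i)$ is the probability that both completions hit rule $i$.

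\textbf{Upper bound.} This is the easy direction and follows directly from \Cref{fct:hypercontractivity}. Apply hypercontractivity to the boolean function $g := \mathbf{1}[\Ind L = i]$, i.e. the indicator that input $x$ hits the $i$-th rule. Then $\Stab_\beta(g) = \Stab_L(\beta,i)$ by the definition of index stability, and $\Pr[g(x)=1] = p_L(i)$ by definition of hit probability, so \Cref{fct:hypercontractivity} gives $\Stab_L(\beta,i) \le p_L(i)^{2/(1+\beta)}$ immediately.

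\textbf{Lower bound.} This is the delicate direction flagged in the proof overview. Condition on the restriction $\rho$. If index $i$ is \emph{not} useful in $L\restriction_\rho$, then neither completion can hit rule $i$, so the conditional contribution is $0$; in particular the sum is supported on the event that $i$ is useful in $L\restriction_\rho$, which has probability exactly $q_L(1-\beta,i) = q_L(\alpha,i)$. If $i$ \emph{is} useful in $L\restriction_\rho$, let $r(\rho) := \Pr[\,\text{the completion hits rule } i \mid \rho\,]$ be the conditional hit probability; since the two completions are independent given $\rho$, the conditional probability that both hit rule $i$ is $r(\rho)^2$. Therefore $\Stab_L(\beta,i) = \E_\rho[r(\rho)^2 \cdot \mathbf{1}[i \text{ useful in } L\restriction_\rho]]$ while $p_L(i) = \E_\rho[r(\rho)]= \E_\rho[r(\rho)\cdot \mathbf{1}[i \text{ useful in } L\restriction_\rho]]$ (the indicator is free since $r(\rho)=0$ when $i$ is not useful). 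Now apply Cauchy--Schwarz to the pair of functions $r(\rho)\cdot\mathbf{1}[i\text{ useful}]$ and $\mathbf{1}[i\text{ useful}]$:
$$
p_L(i)^2 = \pbra{\E_\rho\sbra{r(\rho)\cdot\mathbf{1}[i\text{ useful}]}}^2 \le \E_\rho\sbra{r(\rho)^2\cdot\mathbf{1}[i\text{ useful}]}\cdot\E_\rho\sbra{\mathbf{1}[i\text{ useful}]} = \Stab_L(\beta,i)\cdot q_L(1-\beta,i).
$$
Rearranging gives $p_L(i)^2/q_L(1-\beta,i) \le \Stab_L(\beta,i)$, as desired.

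\textbf{Main obstacle.} The only subtle point is the bookkeeping around the indicator $\mathbf{1}[i\text{ useful in }L\restriction_\rho]$: one must verify that whenever rule $i$ is hit by some completion, index $i$ is indeed useful in the restricted list (immediate, since a hitting completion is a witness), and conversely that the factorization $r(\rho)^2$ for "both completions hit" is valid, which requires that conditioned on $\rho$ the completions of $x$ and $y$ are independent and identically distributed — exactly what the noise-as-restriction coupling provides. Once these are in place, the lower bound is a clean conditional Cauchy--Schwarz and the upper bound is a one-line invocation of hypercontractivity. Combining the two and taking the $(1+\beta)/(2\beta)$ power of $\Stab_L(\beta,i) \ge p_L(i)^2/q_L(1-\beta,i)$ against $\Stab_L(\beta,i) \le p_L(i)^{2/(1+\beta)}$ yields the relation $p_L(i) \le q_L(1-\beta,i)^{(1+\beta)/(2\beta)}$ used later, though that last step is outside the statement to be proved here.
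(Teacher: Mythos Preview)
Your proof is correct and follows essentially the same approach as the paper: the upper bound is the same one-line application of hypercontractivity to the indicator $g=\mathbf{1}[\Ind L=i]$, and the lower bound uses the same restriction-then-complete coupling together with Cauchy--Schwarz conditioned on the event that $i$ is useful. The only cosmetic difference is that the paper phrases the lower bound via the conditional second moment $\E_\rho[r_\rho(i)^2\mid \Ecal(\rho,i)]\ge (\E_\rho[r_\rho(i)\mid\Ecal(\rho,i)])^2$ and then unwinds, whereas you apply Cauchy--Schwarz directly to $r(\rho)\cdot\mathbf{1}[i\text{ useful}]$ against $\mathbf{1}[i\text{ useful}]$; these are the same computation.
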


\begin{proof}
We first prove the upper bound. Let $g:\bin^n\to\bin$ be an indicator boolean function for $\Ind L(x)=i$. Then using \Cref{fct:hypercontractivity}, we have
$$
\Stab_L(\beta,i)=\Stab_\beta(g)\leq\pbra{\Pr\sbra{g(x)=1}}^{\frac2{1+\beta}}=\pbra{\Pr\sbra{\Ind L(x)=i}}^{\frac2{1+\beta}}=p_L(i)^{\frac2{1+\beta}}.
$$

We now turn to prove the lower bound.
Let $\alpha=1-\beta$. Observe that we can sample $(x,y)$ where $x \in \{0,1\}^n,y\sim\Ncal_\beta(x)$ as follows:
\begin{itemize}
\item Sample restriction $\rho\sim\Uscr(n,\alpha)$;
\item Sample uniform $x'\in\bin^{\rho^{-1}(*)}$ and complete stars in $\rho$ with it as $x$;
\item Sample uniform $y'\in\bin^{\rho^{-1}(*)}$ and complete stars in $\rho$ with it as $y$.
\end{itemize}
We thus have
$$
\Stab_L(\beta,i)=\Pr_{\rho,x',y'}\sbra{\Ind L\restriction_\rho(x')=\Ind L\restriction_\rho(y')=i}.
$$
We now make a seemingly redundant, but surprisingly useful, conditioning. Let $\Ecal(\rho,i)$ denote the event
$$
\Ecal(\rho,i) := \sbra{i \text{ is useful in }L\restriction_\rho}.
$$
Then we can equivalently write
$$
\Stab_L(\beta,i)=\Pr_{\rho,x',y'}\sbra{\Ind L\restriction_\rho(x')=\Ind L\restriction_\rho(y')=i \wedge \Ecal(\rho,i)}.
$$
For any fixed $\rho$, define
$$
r_{\rho}(i):=\Pr_{x'}\sbra{\Ind L\restriction_{\rho}(x')=i}.
$$
Since $x',y'$ are independent for any fixed restriction, we have
\begin{align*}
\Stab_L(\beta,i)
=&\Pr_{\rho}[\Ecal(\rho,i)] \cdot \Pr_{\rho,x',y'} \sbra{\Ind L\restriction_\rho(x')=\Ind L\restriction_\rho(y')=i\bigg|\Ecal(\rho,i)}\\
=&q_L(\alpha,i) \cdot \E_{\rho}\sbra{r_{\rho}(i)^2\bigg|\Ecal(\rho,i)}\\
\geq&q_L(\alpha,i) \cdot \pbra{\E_{\rho}\sbra{r_{\rho}(i)\bigg|\Ecal(\rho,i)}}^2\tag{Cauchy-Schwarz inequality}\\
=&\frac1{q_L(\alpha,i)}\pbra{q_L(\alpha,i) \cdot \E_{\rho}\sbra{r_{\rho}(i)\bigg|\Ecal(\rho,i)}}^2\\
=&\frac1{q_L(\alpha,i)}\pbra{\Pr_{\rho,x'}\sbra{\Ind L\restriction_\rho(x')=i\wedge\Ecal(\rho,i)}}^2\\
=&\frac1{q_L(\alpha,i)}\pbra{\Pr_{\rho,x'}\sbra{\Ind L\restriction_\rho(x')=i}}^2\\
=&\frac1{q_L(\alpha,i)}\pbra{\Pr_{x}\sbra{\Ind L(x)=i}}^2
=\frac{p_L(i)^2}{q_L(\alpha,i)}.
\end{align*}
\end{proof}

\begin{corollary}\label{cor:bridging}
Let $L$ be a size-$m$ width-$w$ decision list. Then for any index $i\in[m]$ and $\beta\in(0,1)$, we have
$$
p_L(i)\leq q_L(1-\beta,i)^{\frac{1+\beta}{2\beta}}.
$$
\end{corollary}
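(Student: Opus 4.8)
This is a direct algebraic consequence of the Bridging Lemma (\Cref{lem:bridging}), so the plan is simply to chain its two inequalities and rearrange. Fix an index $i\in[m]$ and $\beta\in(0,1)$, and write $\alpha=1-\beta$ for brevity. If $p_L(i)=0$ the claimed bound is trivial (the right-hand side is nonnegative), so I may assume $p_L(i)>0$; note also that $q_L(\alpha,i)>0$ since $L$ is assumed to contain no useless rules.

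Next I would combine the lower and upper bounds from \Cref{lem:bridging}, namely
$$
\frac{p_L(i)^2}{q_L(\alpha,i)}\leq\Stab_L(\beta,i)\leq p_L(i)^{\frac2{1+\beta}},
$$
to obtain $p_L(i)^2\leq q_L(\alpha,i)\cdot p_L(i)^{\frac2{1+\beta}}$. Dividing through by $p_L(i)^{\frac2{1+\beta}}$ (legitimate since $p_L(i)>0$) gives
$$
p_L(i)^{2-\frac2{1+\beta}}\leq q_L(\alpha,i),
\qquad\text{i.e.}\qquad
p_L(i)^{\frac{2\beta}{1+\beta}}\leq q_L(1-\beta,i),
$$
after simplifying the exponent $2-\frac{2}{1+\beta}=\frac{2\beta}{1+\beta}$.

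Finally, I would raise both sides to the power $\frac{1+\beta}{2\beta}>0$, which preserves the inequality, yielding $p_L(i)\leq q_L(1-\beta,i)^{\frac{1+\beta}{2\beta}}$ as desired. There is no real obstacle here: the entire content is in the two bounds of \Cref{lem:bridging} (hypercontractivity for the upper bound, Cauchy--Schwarz for the lower bound), and the corollary is only the bookkeeping step that eliminates $\Stab_L(\beta,i)$ and exposes the polynomial relationship between $p_L(i)$ and $q_L(1-\beta,i)$ that will feed into the proof of \Cref{lem:compress}.
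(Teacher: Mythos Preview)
Your proposal is correct and is exactly the intended argument: the paper states \Cref{cor:bridging} as an immediate corollary of \Cref{lem:bridging} without writing out a separate proof, and your chaining of the two inequalities followed by the obvious algebraic rearrangement is precisely how one extracts it. Your handling of the degenerate case $p_L(i)=0$ and the positivity of $q_L(\alpha,i)$ is a nice bit of care that the paper leaves implicit.
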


As a remark, we note that \Cref{lem:bridging} can be generalized to arbitrary boolean functions with a similar proof.

\begin{lemma}\label{lem:general_bridging}
Let $g:\bin^n\to\bin$ be a boolean function which is not identically zero. Set $|g|=\Pr\sbra{g(x)=1}$. Then for any $\beta\in(0,1)$, we have
$$
\frac{|g|^2}{\displaystyle\Pr_{\rho\sim\Uscr(n,1-\beta)}[g\restriction_\rho\not\equiv0]}\leq\Stab_\beta(g)\leq|g|^{\frac2{1+\beta}}.
$$
\end{lemma}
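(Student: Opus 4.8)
The plan is to mimic the proof of \Cref{lem:bridging} almost word for word, under the dictionary that replaces the index function $\Ind L$ by $g$, the event ``$\Ind L(x)=i$'' by ``$g(x)=1$'', and the useful-index event $\Ecal(\rho,i)$ by the event $\Ecal(\rho):=[g\restriction_\rho\not\equiv0]$ that $g$ survives the restriction. Indeed, \Cref{lem:bridging} is recovered as the special case where $g$ is the indicator of the event $\Ind L(x)=i$, for which $|g|=p_L(i)$, $\Pr_\rho[\Ecal(\rho)]=q_L(1-\beta,i)$, and $\Stab_\beta(g)=\Stab_L(\beta,i)$; so it would also be reasonable to state and prove \Cref{lem:general_bridging} first and deduce \Cref{lem:bridging} from it.

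The upper bound is immediate: applying \Cref{fct:hypercontractivity} directly to $g$ gives $\Stab_\beta(g)\le(\Pr\sbra{g(x)=1})^{\frac2{1+\beta}}=|g|^{\frac2{1+\beta}}$, with nothing further to do. For the lower bound, set $\alpha=1-\beta$ and use the standard coupling: to draw $x$ uniform and $y\sim\Ncal_\beta(x)$, one may first draw $\rho\sim\Uscr(n,\alpha)$ and then independently fill the stars of $\rho$ with uniform strings $x',y'\in\bin^{\rho^{-1}(*)}$. Hence $\Stab_\beta(g)=\Pr_{\rho,x',y'}\sbra{g\restriction_\rho(x')=g\restriction_\rho(y')=1}$. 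Next I insert the (redundant) conjunct $\Ecal(\rho)$: since $g\restriction_\rho(x')=1$ already forces $g\restriction_\rho\not\equiv0$, this does not change the probability. Writing $r_\rho:=\Pr_{x'}\sbra{g\restriction_\rho(x')=1}$ and using that $x',y'$ are independent given $\rho$, one obtains $\Stab_\beta(g)=\Pr_\rho[\Ecal(\rho)]\cdot\E_\rho\sbra{r_\rho^2\mid\Ecal(\rho)}$.

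Finally, apply Cauchy--Schwarz (Jensen) to pull the square out of the conditional expectation, and unfold the definitions exactly as in \Cref{lem:bridging}:
$$
\Stab_\beta(g)\ge\Pr_\rho[\Ecal(\rho)]\cdot\pbra{\E_\rho\sbra{r_\rho\mid\Ecal(\rho)}}^2
=\frac{\pbra{\Pr_{\rho,x'}\sbra{g\restriction_\rho(x')=1\wedge\Ecal(\rho)}}^2}{\Pr_\rho[\Ecal(\rho)]}
=\frac{\pbra{\Pr_x\sbra{g(x)=1}}^2}{\Pr_\rho[\Ecal(\rho)]}
=\frac{|g|^2}{\Pr_{\rho\sim\Uscr(n,1-\beta)}[g\restriction_\rho\not\equiv0]}.
$$
There is essentially no real obstacle beyond bookkeeping; the one point that needs a word is that the denominator is positive (so that the conditional expectations above are well defined), which holds because $\alpha=1-\beta\in(0,1)$ assigns positive mass to the all-stars restriction and $g\not\equiv0$, whence $\Pr_\rho[\Ecal(\rho)]>0$.
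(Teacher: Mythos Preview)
Your proof is correct and follows exactly the approach the paper intends: the paper does not give a separate argument for \Cref{lem:general_bridging} but simply remarks that it follows ``with a similar proof'' to \Cref{lem:bridging}, and your substitution of $g$ for the indicator of $\Ind L(x)=i$ and of $\Ecal(\rho)=[g\restriction_\rho\not\equiv0]$ for the useful-index event is precisely that similar proof. The only addition you make beyond the paper's template is the explicit observation that $\Pr_\rho[\Ecal(\rho)]>0$, which is a welcome clarification.
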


\subsection{Putting everything together}

Now we put everything together and give the proof of \Cref{lem:compress}.
\begin{proof}[Proof of \Cref{lem:compress}]
Recall that we sorted $[m]=\{j_1,\ldots,j_m\}$ such that
$p_L(j_1)\geq p_L(j_2)\geq\cdots\geq p_L(j_m)$. Let $J=\{j_1,\ldots,j_t,m\}$ for $t$ to be optimized later.

Next, let $\beta\in(0,1)$ to be optimized later and set $\alpha=1-\beta$. Sort $[m]=\{i_1,\ldots,i_m\}$ such that
$q_L(\alpha,i_1)\geq q_L(\alpha,i_2)\geq\cdots\geq q_L(\alpha,i_m)$.
By \Cref{clm:usefulprobsum} and \Cref{lemma:expectedusenum}, we have
$$
\sum_{k=1}^mq_L(\alpha,i_k)=\E_{\rho\sim\Uscr(n,\alpha)}\sbra{\usenum{L\restriction_\rho}}\leq\pbra{\frac4{1-\alpha}}^w=\pbra{\frac4\beta}^w.
$$
Note that we have sorted $q_L$ in decreasing order, so
$$
q_L(\alpha,i_k)\leq\frac1k\pbra{\frac4\beta}^w.
$$
Observe that $j_1,\ldots,j_t$ have the largest hit probability, and apply \Cref{cor:bridging}, then
\begin{align*}
\sum_{j\notin J}p_L(j)&\leq\sum_{k=t+1}^mp_L(j_k)\leq\sum_{k=t+1}^mp_L(i_k)\leq
\sum_{k=t+1}^m q_L(\alpha,i_k)^{\frac{1+\beta}{2\beta}}\\
&\leq\pbra{\frac4\beta}^{w\times\frac{1+\beta}{2\beta}}\sum_{k \ge t+1}\pbra{\frac1k}^{\frac{1+\beta}{2\beta}}\\
&\leq\pbra{\frac4\beta}^{w\times\frac{1+\beta}{2\beta}}\times\frac{2\beta}{1-\beta}\times t^{-\frac{1-\beta}{2\beta}}.
\end{align*}
If we restrict $\beta\leq1/2$ and choose
$$
t=\pbra{\frac1\eps}^{\frac{2\beta}{1-\beta}}\pbra{\frac4\beta}^{w\times\frac{1+\beta}{1-\beta}}\pbra{\frac{2\beta}{1-\beta}}^{\frac{2\beta}{1-\beta}}\leq4\pbra{\frac1\eps}^{4\beta}\pbra{\frac4\beta}^{3w},
$$
then
$$
\Pr\sbra{\Ind L(x)\notin J}=\sum_{j\notin J}p_L(j)\leq\eps.
$$
Now we divide $\eps$ into two cases. Assume $\eps=2^{-\ell w}$. Then:
\begin{itemize}
\item If $\ell \le 2$ we set $\beta=1/2$ and get $t=2^{O(w)}$.
\item If $\ell \ge 2$ we set $\beta=1/\ell$ and get $t=\ell^{O(w)}$.
\end{itemize}
One can verify that in either case we get
$$
t=\pbra{2+\frac1w\log\frac1\eps}^{O(w)}.
$$
\end{proof}

\section{Lower bounds}
\label{sec:lower}

In this section, we prove two lower bounds for decision list compression, which show that the bounds in \Cref{thm:main} are tight up to constants.

\begin{claim}\label{lem:approxlowerbound}
For any $w$, there is a width-$w$ decision list $L:\bin^w\to\bin$ such that
$$
\Pr\sbra{L(x)\neq L'(x)}>1/3
$$
for any width-$w$ decision list $L'$ of size at most $2^w/100w$.
\end{claim}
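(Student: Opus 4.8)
The plan is a counting argument. I will show that the collection of width-$w$ decision lists of size at most $s:=\lfloor 2^w/(100w)\rfloor$ is far too small to $(1/3)$-approximate every Boolean function on $w$ variables, and then use the fact that any Boolean function on $w$ variables is itself computed by a width-$w$ decision list, so that the desired hard instance $L$ can be taken to be the decision list computing such a function. Note first that for $w\le 9$ one has $2^w/(100w)<1$, so there is no decision list of size at most $2^w/(100w)$ and the claim holds vacuously; from now on I assume $w\ge 10$, hence $s\ge 1$.

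First, I would record that every $f:\bin^w\to\bin$ is computed exactly by a width-$w$ decision list on the $w$ variables: for each $x\in f^{-1}(1)$ include the rule $(C_x,1)$, where $C_x$ is the width-$w$ term satisfied only by $x$, and append the default rule $(1,0)$. Consequently it is enough to produce a single $f:\bin^w\to\bin$ that no width-$w$ decision list of size at most $s$ can $(1/3)$-approximate; the decision list $L$ computing this $f$ is then the example claimed.

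Next I would count. On $w$ variables there are at most $3^w$ terms of width at most $w$ (each variable occurs positively, negatively, or not at all), hence at most $2\cdot 3^w$ possible rules, hence at most $2\cdot(2\cdot 3^w)^{s}$ width-$w$ decision lists of size at most $s$; taking base-$2$ logarithms and using $w\ge 10$ and $s\ge 1$, this number is at most $2^{2sw}\le 2^{2^w/50}$. Separately, for a fixed $g:\bin^w\to\bin$ the number of $g'$ with $\Pr_x[g(x)\ne g'(x)]\le 1/3$ equals $\sum_{i=0}^{\lfloor 2^w/3\rfloor}\binom{2^w}{i}\le 2^{H(1/3)\cdot 2^w}$, where $H(1/3)=\log_2 3-2/3<0.92$. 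Multiplying the two estimates, the number of functions $\bin^w\to\bin$ that are $(1/3)$-approximable by some width-$w$ decision list of size at most $s$ is at most $2^{(1/50+0.92)\cdot 2^w}<2^{2^w}$, which is strictly smaller than the number of Boolean functions on $w$ variables. Hence some $f$ is not $(1/3)$-approximable by any such decision list, and the width-$w$ decision list $L$ computing $f$ satisfies $\Pr[L(x)\ne L'(x)]>1/3$ for every width-$w$ decision list $L'$ of size at most $2^w/(100w)$.

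I do not expect a genuine obstacle here; the only point needing attention is tracking the constants so that the absolute constant $100$ (rather than something larger) makes the final inequality $1/50+H(1/3)<1$ go through, and the crude estimates above already leave a comfortable margin. One could instead try to exhibit an explicit hard function in place of the existential one, but this is unnecessary for the way the claim is used in \Cref{sec:lower}.
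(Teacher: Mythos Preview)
Your proof is correct and follows essentially the same counting argument as the paper: both show that the number of small width-$w$ decision lists times the size of a Hamming ball of relative radius $1/3$ is strictly less than $2^{2^w}$, and then invoke that every Boolean function on $w$ variables is itself a width-$w$ decision list. Your handling of the small-$w$ vacuous case and the explicit use of the entropy bound $H(1/3)<0.92$ are slightly more careful than the paper's write-up, but the structure is identical.
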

\begin{proof}
Since any boolean function on $w$ variables can be expressed as some width-$w$ decision list, there are $2^{2^w}$ possible $L$. On the other hand, for any fixed $L'$, it can approximate at most
$$
\binom{2^w}{2^w/3}\times2^{2^w/3}\leq2^{0.97\times2^w}
$$
different boolean functions within distance $1/3$; and for fixed size $m$, there are at most $\pbra{3^w\times 2}^m$ distinct size-$m$ width-$w$ decision lists. As small-size decision lists can be embedded in larger ones, when restricted to size at most $2^w/100w$, width-$w$ decision lists only approximate at most
$$
\pbra{3^w\times 2}^{\frac{2^w}{100w}}\times2^{0.97\times2^w}<2^{2^w}
$$
different boolean functions on $w$ variables.
\end{proof}

\begin{claim}\label{lem:exactlowerbound}
For any $w$ and $n>2w$, there is a width-$w$ decision list $L:\bin^n\to\bin$ which is not equivalent to any width-$w$ decision list $L'$ of size smaller than $\binom nw/n^2$.
\end{claim}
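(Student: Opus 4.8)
We would prove this by a counting argument, in the spirit of \Cref{lem:approxlowerbound}: far more functions are computed by width-$w$ decision lists than by width-$w$ decision lists of size below $\binom{n}{w}/n^2$.

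First I would lower-bound the number of distinct functions computed by width-$w$ decision lists. For each family $\mathcal F\subseteq\binom{[n]}{w}$ of $w$-element subsets, consider the monotone DNF $f_{\mathcal F}=\bigvee_{S\in\mathcal F}\bigwedge_{i\in S}x_i$; this is a width-$w$ decision list, namely the rules $(\bigwedge_{i\in S}x_i,1)$ for $S\in\mathcal F$ followed by the default rule $(1,0)$. The map $\mathcal F\mapsto f_{\mathcal F}$ is injective: if $x$ is the input whose support is exactly some set $S$ with $|S|=w$, then $f_{\mathcal F}(x)=1$ iff some $S'\in\mathcal F$ satisfies $S'\subseteq S$, which, since $|S'|=|S|=w$, forces $S'=S$; hence $f_{\mathcal F}(x)=1$ precisely when $S\in\mathcal F$, so $f_{\mathcal F}$ determines $\mathcal F$. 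Thus there are at least $2^{\binom{n}{w}}$ distinct functions computed by width-$w$ decision lists on $n$ variables.

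Next I would upper-bound the number of functions computed by small width-$w$ decision lists. A rule is a term of width at most $w$ together with a value in $\{0,1\}$, and the number of terms of width at most $w$ is $\sum_{k=0}^{w}\binom{n}{k}2^k\le(2n)^{w+1}$, where I use $n>2w$; hence there are at most $R:=2(2n)^{w+1}$ possible rules. A width-$w$ decision list of size $s$ is an ordered tuple of at most $s$ rules, so the number of width-$w$ decision lists of size at most $s$ is at most $\sum_{s'\le s}R^{s'}\le 2R^{s}$. In particular, at most $2R^{\binom{n}{w}/n^2}$ functions are computed by width-$w$ decision lists of size smaller than $\binom{n}{w}/n^2$. (If $\binom{n}{w}/n^2\le 1$ there is nothing to prove, since no decision list has fewer than one rule.)

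To conclude I would compare the two counts. We have $\log_2\!\big(2R^{\binom{n}{w}/n^2}\big)=1+\tfrac{1}{n^2}\binom{n}{w}\log_2 R$ with $\log_2 R\le 1+(w+1)\log_2(2n)$, and the hypothesis $n>2w$ makes $1+(w+1)\log_2(2n)<n^2$ by an elementary calculation. Hence $2R^{\binom{n}{w}/n^2}<2^{\binom{n}{w}}$, which is strictly below the number of functions computed by width-$w$ decision lists, so some width-$w$ decision list $L:\bin^n\to\bin$ is not equivalent to any width-$w$ decision list of size smaller than $\binom{n}{w}/n^2$. There is no genuine obstacle in this argument; the only points requiring care are the injectivity claim above and the elementary inequality $1+(w+1)\log_2(2n)<n^2$ for $n>2w$. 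The factor $n^2$ is deliberately generous — the same counting in fact rules out decision lists of size $\Omega\!\big(\binom{n}{w}/(w\log n)\big)$ — and is chosen only for cleanliness of the statement.
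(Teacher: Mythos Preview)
Your proof is correct and follows essentially the same counting argument as the paper: lower-bound the number of functions computable by width-$w$ decision lists by $2^{\binom{n}{w}}$ via an explicit family, upper-bound the number of small width-$w$ decision lists, and compare. The only cosmetic difference is the explicit family chosen for the lower bound --- the paper uses decision lists with all $\binom{n}{w}$ monotone width-$w$ terms as conditions and an arbitrary value vector $v\in\{0,1\}^{\binom{n}{w}}$, whereas you use monotone DNFs indexed by $\mathcal F\subseteq\binom{[n]}{w}$ --- but both yield the same count $2^{\binom{n}{w}}$ and the injectivity arguments are effectively identical (evaluate at the indicator of a $w$-set).
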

\begin{proof}
Let $m=\binom nw$ and sort all $\binom nw$ subsets of $[n]$ with size $w$ as $\cbra{S_1,\ldots,S_m}$ arbitrarily.
For any $i\in[m]$, define $C_i=\bigwedge_{j\in S_i}x_j$.
For any $v\in\bin^m$, let $L_v=((C_1,v_1),\ldots,(C_m,v_m),(1,0))$ be a size-$(m+1)$ width-$w$ decision list.

As small-size decision lists can be embedded in larger ones, assume towards a contradiction that any $L_v$ is equivalent to some size-$(m/n^2)$ width-$w$ decision list $L_v'$. Given $L_v'$, we can recover $L_v$ by enumerating all assignments, since all rules in $L_v$ are useful. Thus, by counting argument, the number of possible $L_v'$ is upper bounded by
$$
\pbra{2\times\sum_{k=0}^w2^k\binom nk}^{\binom nw/n^2}\leq\binom nw^{2m/n^2}<2^m.
$$
\end{proof}

Now the general lower bound follows immediately.

\begin{corollary}\label{cor:lowerbound}
For any $w$ and $\eps\leq1/3$, there is a width-$w$ decision list $L$ such that
$$
\Pr\sbra{L(x)\neq L'(x)}>\eps
$$
holds for any width-$w$ decision list $L'$ of size at most
$$
\pbra{2+\frac1w\log\frac1\eps}^{O(w)}.
$$
\end{corollary}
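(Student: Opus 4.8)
The plan is to prove \Cref{cor:lowerbound} by combining the two hard instances from \Cref{lem:approxlowerbound,lem:exactlowerbound}, treating two regimes of $\eps$ separately with the split at roughly $2^{-2w}$. In the ``large $\eps$'' regime the target size bound $\pbra{2+\frac1w\log\frac1\eps}^{O(w)}$ is just $2^{O(w)}$ and is delivered by \Cref{lem:approxlowerbound}; in the ``small $\eps$'' regime an $\eps$-approximation is forced to be an exact computation, and the bound is delivered by \Cref{lem:exactlowerbound}. In each regime the only real work is to check that the size bound produced by the claim dominates $\pbra{2+\frac1w\log\frac1\eps}^{O(w)}$ for a small enough choice of the hidden constant in the exponent.

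\textbf{Regime $2^{-2w}\le\eps\le1/3$.} Take $L:\bin^w\to\bin$ to be the decision list from \Cref{lem:approxlowerbound}. Since $\eps\le1/3$, any width-$w$ decision list $L'$ with $\Pr\sbra{L(x)\neq L'(x)}\le\eps$ in particular satisfies $\Pr\sbra{L(x)\neq L'(x)}\le1/3$ (after an averaging argument over any extra coordinates we may also assume $L'$ is on the same $w$ variables, at no increase in size), so \Cref{lem:approxlowerbound} forces $L'$ to have size more than $2^w/(100w)$. Because $\eps\ge2^{-2w}$ gives $2+\frac1w\log\frac1\eps\le4$, we get $\pbra{2+\frac1w\log\frac1\eps}^{cw}\le2^{2cw}\le2^w/(100w)$ once $c<1/2$ and $w$ exceeds an absolute constant, so no width-$w$ decision list of size at most $\pbra{2+\frac1w\log\frac1\eps}^{O(w)}$ can $\eps$-approximate $L$. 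Taking the cutoff slightly below $2^{-2w}$, say $2^{-2w-2}$, the same computation still works, which removes any gap between the two regimes.

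\textbf{Regime $\eps<2^{-2w-2}$.} Let $n$ be the largest integer with $2^{-n}>\eps$, so that $n=\Theta(\log(1/\eps))$ and $n>2w$. Apply \Cref{lem:exactlowerbound} to obtain a width-$w$ decision list $L:\bin^n\to\bin$ that is not equivalent to any width-$w$ decision list of size less than $\binom nw/n^2$. The point is that $\eps$-approximation here coincides with exact computation: if a width-$w$ decision list $L'$ (which, by averaging over any extra coordinates, we may assume is on $x_1,\dots,x_n$ at no increase in size) satisfies $\Pr\sbra{L(x)\neq L'(x)}\le\eps<2^{-n}$, then, as this probability is an integer multiple of $2^{-n}$, it must equal $0$; that is, $L'\equiv L$, whence $\mathrm{size}(L')\ge\binom nw/n^2$. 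Finally, using $n/w>2$ one checks the routine inequality
$$
\binom nw\Big/n^2\ \ge\ (n/w)^w/n^2\ \ge\ \pbra{2+\tfrac nw}^{\Omega(w)}\ =\ \pbra{2+\tfrac1w\log\tfrac1\eps}^{\Omega(w)},
$$
so $L$ cannot be $\eps$-approximated by a width-$w$ decision list of size at most $\pbra{2+\frac1w\log\frac1\eps}^{O(w)}$.

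\textbf{Main obstacle.} There is no conceptual difficulty, since both hard functions are already supplied by \Cref{lem:approxlowerbound,lem:exactlowerbound}. The work is entirely in the calibration: one must fix a single small constant in the exponent $O(w)$ so that $\pbra{2+\frac1w\log\frac1\eps}^{O(w)}$ stays below $2^w/(100w)$ throughout the first regime and below $\binom nw/n^2$ throughout the second, verify that the two regimes (with the shifted cutoff $2^{-2w-2}$) cover all $\eps\le1/3$, and dispatch the degenerate small cases — small $w$, or a size bound dropping below $2$ — by an elementary counting argument of the same flavor or, in the extreme, by separating a single dictator from the constant functions.
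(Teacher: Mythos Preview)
Your proposal is correct and follows essentially the same approach as the paper: split at $\eps\approx 2^{-2w}$, invoke \Cref{lem:approxlowerbound} in the large-$\eps$ regime and \Cref{lem:exactlowerbound} in the small-$\eps$ regime, using that $\eps<2^{-n}$ forces exact computation. The paper's proof is terser (it simply sets $n=\log(1/\eps)$ and asserts the size identities), while you are more explicit about the integer choice of $n$, the averaging reduction to the relevant variables, and the constant calibration, but the argument is the same.
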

\begin{proof}
For $\eps\geq2^{-2w}$, let $L$ be the decision list in \Cref{lem:approxlowerbound}. Then it cannot be approximated within $\eps<1/3$ by a decision list $L'$ of size at most
$$
\frac{2^w}{100w}=\pbra{2+\frac1w\log\frac1\eps}^{O(w)}.
$$
For $\eps<2^{-2w}$, let $L$ be the decision list in \Cref{lem:exactlowerbound} with $n=\log(1/\eps)$. Since now $\eps=2^{-n}$, the desired $L'$ must be equivalent to $L$. Thus it cannot be realized by a decision list $L'$ of size at most
$$
\frac{\binom nw}{n^2}=\binom{\log\frac1\eps}w^{O(1)}=\pbra{2+\frac1w\log\frac1\eps}^{O(w)}.
$$
\end{proof}

\bibliographystyle{abbrv}
\bibliography{ref}

\end{document}